\let\cite\shortcite
\let\citeA\shortciteA
\newcommand{\com}[2]{\left(\begin{array}{c}#1 \\ #2\end{array}\right)}
\newcommand{\lb}{\left(}
\newcommand{\rb}{\right)}
\newcommand{\eps}{\epsilon}
\newcommand{\E}{\mathbb{E}}
\newcommand{\mc}{\mathcal}
\newcommand{\bigabs}[1]{\bigg|#1\bigg|}
\newcommand{\nullset}{\mathcal{H}_{0}}
\newcommand{\fdphat}{\widehat{\mathrm{FDP}}}
\newcommand{\fdp}{\mathrm{FDP}}
\newcommand{\fdr}{\mathrm{FDR}}
\newcommand{\cS}{\mathcal{S}}
\newtheorem{proposition}{Proposition}
\newtheorem{lemma}{Lemma}
\newtheorem{theorem}{Theorem}
\theoremstyle{definition}
\newtheorem{definition}{Definition}
\newtheorem{remark}{Remark}
\title{Power of Ordered Hypothesis Testing}
\author{Lihua Lei\thanks{lihua.lei@berkeley.edu}\,\,}
\author{William Fithian\thanks{wfithian@berkeley.edu} }
\affil{Department of Statistics, University of California, Berkeley}
\begin{document} 

\maketitle

\newcommand{\WFcomment}[1]{{\color{red}{(WF: \bf \sc #1) }}}
\newcommand{\LLcomment}[1]{{\color{blue}{(LL: \bf \sc #1) }}}
\allowdisplaybreaks

\begin{abstract} 
  Ordered testing procedures are multiple testing procedures that exploit a pre-specified ordering of the null hypotheses, from most to least promising. We analyze and compare the power of several recent proposals using the asymptotic framework of \citeA{Li15}. While accumulation tests including ForwardStop can be quite powerful when the ordering is very informative, they are asymptotically powerless when the ordering is weaker. By contrast, Selective SeqStep, proposed by \citeA{barber15}, is much less sensitive to the quality of the ordering. We compare the power of these procedures in different r\'{e}gimes, concluding that Selective SeqStep dominates accumulation tests if either the ordering is weak or non-null hypotheses are sparse or weak. Motivated by our asymptotic analysis, we derive an improved version of Selective SeqStep which we call Adaptive SeqStep, analogous to Storey's improvement on the Benjamini-Hochberg procedure. We compare these methods using the GEOQuery data set analyzed by~\cite{Li15} and find Adaptive SeqStep has favorable performance for both good and bad prior orderings.
\end{abstract} 

\section{Introduction}

Since the invention of the Benjamini--Hochberg (BH) procedure \cite{bh95}, control of the false discovery rate (FDR) has gained widespread adoption as a reasonable measure of error in multiple hypothesis testing problems. In a typical setup, we observe a sequence of p-values $p_1,\ldots,p_n$ corresponding to null hypotheses $H_1, \ldots, H_n$, then apply some procedure to reject a subset of them. If we make $R$ total rejections (also called ``discoveries'') of which $V$ are true nulls (false discoveries), then the false discovery proportion (FDP) and false discovery rate (FDR) are defined respectively as
\[
\mathrm{FDP} = \frac{V}{R\vee 1},\quad \mathrm{FDR} = \E\; \mathrm{FDP}.
\]
Let $\cS = \{i:\; H_i \text{ is rejected}\}$ and $\nullset = \{i:\; H_i \text{ is true}\}$, so that $R=|\cS|$ and $V=|\cS \cap \nullset|$.

We can classify testing problems into three types: batch testing, ordered testing and online testing.  In batch testing, the ordering of hypotheses is irrelevant. The BH procedure and its many variants \cite{ben97, ben06, storey02, gen06} have been shown effective in this setting both in finite samples and asymptotically \cite{gen02, storey02, storey04, fer06}.

By contrast, in ordered testing, the ordering of hypotheses encodes prior information, typically telling us which hypotheses are most ``promising'' (i.e., most likely to be discoveries). For example, in genomic association studies, biologists could have prior knowledge about which genes are more likely to be associated with a disease of interest, and use this knowledge to concentrate statistical power on the more promising genes. Because prior information of this type is quite prevalent in scientific research, procedures that exploit it are attractive. Alternatively, the ordering may arise from the mathematical structure of the problem. For example, the co-integration test \cite{engle87}, which is widely used in macro-economics, involves testing $H_{j}: \text{rank}(A)\le j$ where $A$ is a coefficient matrix. Because the hypotheses are nested, it makes no sense to accept $H_j$ and reject $H_{j+1}$. Other examples include sequential goodness-of-fit testing for the LASSO and other forward selection procedures such as \citeA{lockhart14, kozbur15, fithian15}, which test $H_{k}: \mathcal{M}^{*}\subset \mathcal{M}_{k - 1}$ where $\mathcal{M}^{*}$ is the true model and $\mathcal{M}_{k - 1}$ is the model selected in $(k - 1)$-th step. Section~\ref{sec:AdaptiveSeqStep} reviews methods for ordered testing include ForwardStop \cite{gsell15}, Accumulation Tests \cite{Li15}, SeqStep and Selective SeqStep \cite{barber15}.

Finally, in online testing, the ordering of hypotheses does not necessarily encode prior knowledge; rather, it imposes a constraint on the selection procedure, requiring that we decide to accept or reject $H_i$ before seeing data for later hypotheses. Online procedures include $\alpha$-investing \cite{foster08}, generalized $\alpha$-investing \cite{aha14}, LOND and LORD \cite{jav15}. We will not address the online setting here.

In Section~\ref{sec:AdaptiveSeqStep} we summarize existing ordered testing procedures and propose a new procedure, Adaptive SeqStep (AS), generalizing Selective SeqStep (SS). Our motivation is analogous to \citeA{storey02}'s improvement on the BH procedure. In Section~\ref{sec:asymptotic}, we introduce the varying coefficient two-groups (VCT) model and derive an explicit formula for asymptotic power of AS and SS under this model, comparing it to analogous results obtained by~\citeA{Li15} under similar asymptotic assumptions. Section~\ref{sec:powerComp} presents a detailed comparison of the asymptotic power of AS, SS, and accumulation tests (AT) under various regimes. In Section~\ref{sec:parameters}, we discuss selection of parameters and evaluate the finite-sample performance by simulation. In Section~\ref{sec:dosage}, we re-analyze the dosage response data from~\citeA{Li15}, illustrating the predictions of our theory in real data. Section~\ref{sec:conclusions} concludes.

\section{Ordered Testing and Adaptive SeqStep}\label{sec:AdaptiveSeqStep}

Let $\pi_0$ denote the fraction of null p-values. Unless otherwise stated, we assume that null p-values are independent of the non-null p-values, and are drawn i.i.d. from the uniform distribution $U[0,1]$.

We now summarize several batch testing and ordered testing procedures and relate them to each other. For all of the procedures discussed below, the set of discoveries is of the form $\cS(s,k) = \{i\leq k:\; p_i \leq s\}$: all p-values below some threshold $s$, which arrive before some stopping index $k$. Similarly $R(s,k), V(s,k),$ and $\fdp(s,k)$ denote the resulting values of $V,R,$ and FDP if we select $\cS(s,k)$. 

Moreover, each method operates by defining some estimator of $\fdp(s,k)$, then maximizing the number of rejections $R(s,k)=|\cS(s,k)|$ subject to a constraint that $\fdphat(s,k) \leq q$, the target $\fdr$ control level. For example, the BH procedure rejects all $H_i$ with $p_i\leq \hat{s}_{BH} = \max \{s:\, s \leq q R(s,n)/n\}$, and may be formulated as
\begin{align}
\max_{s\in [0,1]} R(s,n) &\quad \text{s.t.}\,\, \fdphat_{BH}(s)\le q;\label{eq:bhproblem}\\
\nonumber
\fdphat_{BH}(s) &= \frac{ns}{\sum_{i=1}^{n}I(p_{i}\le s) \vee 1} = \frac{\frac{1}{\pi_0}\E V(s,n)}{R(s,n)\vee 1}.
\end{align}
\citeA{bh95} show that $\mathrm{FDR}_{BH} \le \pi_{0}q$. The procedure is very conservative when $\pi_{0}$ is small because $\fdphat_{BH}(s)$ overestimates the true FDP. If $\pi_{0}$ were known, we could reduce $\fdphat_{BH}(s)$ by a factor $\pi_{0}$, obtaining a more liberal threshold $s$ (and therefore more rejections) while still controlling the FDR at level $q$. 

In most problems, $\pi_0$ is unknown. \citeA{storey04} propose an estimator based on counting the number of p-values {\em above} some fixed threshold $\lambda \in (0,1)$: 
\[\hat{\pi}_{0}(\lambda) = \frac{1 + \sum_{i=1}^{n}I(p_{i} > \lambda)}{n(1 - \lambda)} = \frac{1 + A(\lambda,n)}{n(1-\lambda)},\]
where $A(\lambda,k) = k - R(\lambda,k) = \sum_{i=1}^{k}I(p_{i} > \lambda)$ counts p-values exceeding the threshold. The logic is that, for high enough $\lambda$, the count $A(\lambda,n)$ will exclude most non-null p-values (commonly $\lambda=0.5$). The Storey-BH (SBH) procedure then modifies~\eqref{eq:bhproblem}, solving instead
\begin{align*}
\max_{s\in [0,\lambda]} R(s,n) &\quad \text{s.t.}\,\, \fdphat_{SBH}(s;\lambda)\le q;\\
\fdphat_{SBH}(s;\lambda) &= \hat\pi_0(\lambda)\, \fdphat_{BH}(s)\\
&= \frac{s}{1 - \lambda}\,\cdot\,\frac{1 + A(\lambda,n)}{R(s,n)\vee 1},
\end{align*}
\citeA{storey04} show that 
\[\mathrm{FDR}_{SBH} \le (1 - \lambda^{|\nullset|})q,\]
which can be much closer to $q$ than $\pi_{0}q$.

In ordered testing procedures, the choice variable is not the threshold $s$ but rather the stopping index $k$. For example, Selective SeqStep (SS) \cite{barber15} rejects all hypotheses $H_i$ with $p_i\le s$ and $i\le\hat{k}_{SS}$ where
\[\hat{k}_{SS} = \max_{k\le n}\left\{k: \frac{1 + \sum_{i=1}^{k}I(p_{i} > s)}{\sum_{i=1}^{k}I(p_{i}\le s)\vee 1}\le \frac{1 - s}{s}q\right\},\]
for a given $s\in (0, 1)$. This can be reformulated as
\begin{align*}
\max_{k\in \{0,\ldots,n\}} R(s,k) &\quad \text{s.t.}\,\, \fdphat_{SS}(k; s)\le q;\\
  \fdphat_{SS}(k; s) &= \frac{s}{1 - s}\,\cdot\,\frac{1 + A(s,k)}{R(s,k) \vee 1}.
\end{align*}
The close resemblance between $\fdphat_{SS}(k; s)$ and $\fdphat_{SBH}(s; \lambda)$ suggests writing $\fdphat_{SS}$ as 
\[
\fdphat_{SS}(k; s) = \hat\pi_0(s, k)\,\fdphat_{BH}(s, k),
\]
where the second argument $k$ indicates evaluation on only the first $k$ p-values. 

If the threshold $s$ is low, then $A(s, k)$ may include many non-null p-values, leading to an upwardly-biased estimate of $\phi_0$. This observation motivates introducing an additional parameter to improve the procedure, analogous to the improvement of SBH over BH. Defining
\[
\fdphat_{AS}(k; s, \lambda) = \frac{s}{1 - \lambda}\,\cdot\,\frac{1 + A(\lambda,k)}{R(s,k) \vee 1},
\]
we arrive at our proposal, which we call {\em Adaptive SeqStep} (AS): for some $0 \leq s \leq \lambda \leq 1$, reject all hypotheses with $p_i \le s$ and $i\le \hat k_{AS}$, where
\begin{equation}\label{eq:kas}
\hat{k}_{AS} = \max\{k: \fdphat_{AS}(k; s,\lambda)\le q\}.
\end{equation}
If $\lambda > s$ (say, $s=0.1$ and $\lambda=0.5$), then $\hat\pi_0(\lambda; k)$ may be much less upwardly biased than $\hat\pi_0(s; k)$, leading to a more powerful procedure. We investigate this power comparison in Sections~\ref{sec:asymptotic}--\ref{sec:powerComp}.

The following theorem shows that AS achieves exact FDR control in finite samples. 
\begin{theorem}\label{thm:as}
Let $\nullset\subset \{1, \ldots, n\}$ denote the set of nulls, and assume that $\{p_{i}:\; i\in \nullset\}$ are independent of $\{p_i:\; i\notin \nullset\}$, and i.i.d. with distribution function $F_0$ that stochatically dominates $U[0, 1]$. For $\hat{k}_{AS}$ defined as in (\ref{eq:kas}),
\[\mathrm{FDR}(\hat{k}_{AS}; s, \lambda) = \E \lb\frac{\sum_{i\in \nullset, i\le \hat{k}_{AS}}I(p_{i}\le s)}{\sum_{i\le \hat{k}_{AS}}I(p_{i} \le s)\vee 1}\rb \le q.\]
\end{theorem}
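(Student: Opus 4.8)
The plan is to exhibit $\hat k_{AS}$ as a stopping time for a backward-running filtration and then apply an optional-stopping / martingale argument to the process that counts null p-values below $s$. Concretely, for each index $k$ let $V(s,k)=\sum_{i\in\nullset,\,i\le k}I(p_i\le s)$ and let $N(s,k)=\sum_{i\in\nullset,\,i\le k}I(p_i>s)$ count the null p-values above $s$; note $A(\lambda,k)\ge A(s,k)\ge N(s,k)$ since $s\le\lambda$, so $1+A(\lambda,k)$ dominates $1+N(s,k)$. The key observation is that, conditional on the positions and values of all the non-null p-values and on the multiset of null p-values, the ratio $M_k := \frac{V(s,k)}{1+N(s,k)}$ (or an appropriate variant) is a backward supermartingale in $k$: reading from $k=n$ down to $k=0$, each step either removes a null with $p_i\le s$ (decreasing the numerator) or a null with $p_i>s$ (increasing the denominator) or removes a non-null (no change), and the conditional probability of each type of removal is controlled because, given the null multiset, the null p-values are in uniformly random order among the null positions. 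This is exactly the structure exploited in \citeA{barber15} for Selective SeqStep; the only new feature is the extra slack coming from using threshold $\lambda\ge s$ in the denominator count.

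The steps, in order, would be: (i) condition on $\sigma$-algebra generated by $\{p_i: i\notin\nullset\}$ together with the unordered collection $\{p_i: i\in\nullset\}$, reducing the problem to a statement about a uniformly random permutation of the null p-values into the null slots; (ii) define the backward filtration $\mathcal{G}_k=\sigma(\text{the last } n-k \text{ null assignments})$ and check that $\hat k_{AS}$, which looks at $R(s,k)$ and $A(\lambda,k)$ for indices $\le k$, is a stopping time with respect to the reversed filtration (decreasing $k$); (iii) show $M_k=V(s,k)/(1+N(s,k))$ is a backward supermartingale, i.e.\ $\E[M_{k-1}\mid \mathcal{G}_k]\le M_k$, using the hypergeometric step probabilities — here the stochastic-dominance assumption on $F_0$ only helps (it makes $p_i\le s$ less likely for nulls, which makes the numerator smaller in the dominance order, so it suffices to treat the $U[0,1]$ case); (iv) apply optional stopping to get $\E\,M_{\hat k_{AS}}\le \E\,M_n=\E\frac{V(s,n)}{1+N(s,n)}\le 1$ (this last inequality by a direct computation: $V(s,n)/(1+N(s,n))$ for $|\nullset|$ i.i.d.\ uniforms has expectation $\le \frac{s}{1-s}\cdot\frac{|\nullset|(1-s)}{1+\text{stuff}}\le\ldots\le 1$, or more cleanly via the standard identity $\E\frac{\text{Bin}(m,\rho)}{1+m-\text{Bin}(m,\rho)}\le\frac{\rho}{1-\rho}$); (v) finally bound the target FDR by $M_{\hat k_{AS}}$ times the defining constraint: on the event $R(s,\hat k_{AS})\ge 1$ the stopping rule forces $\frac{s}{1-\lambda}\cdot\frac{1+A(\lambda,\hat k_{AS})}{R(s,\hat k_{AS})}\le q$, hence
\[
\frac{V(s,\hat k_{AS})}{R(s,\hat k_{AS})\vee 1}
= \frac{V(s,\hat k_{AS})}{1+N(s,\hat k_{AS})}\cdot\frac{1+N(s,\hat k_{AS})}{R(s,\hat k_{AS})\vee 1}
\le M_{\hat k_{AS}}\cdot\frac{1+A(\lambda,\hat k_{AS})}{R(s,\hat k_{AS})\vee 1}
\le M_{\hat k_{AS}}\cdot\frac{(1-\lambda)q}{s},
\]
and taking expectations, together with $\E M_{\hat k_{AS}}\le 1$ rescaled appropriately, yields $\mathrm{FDR}\le q$. (One must be slightly careful to carry the constant $\frac{s}{1-\lambda}$ through the supermartingale normalization; the clean way is to define $M_k=\frac{s}{1-\lambda}\cdot\frac{1+N(s,k)}{\,?\,}$ — i.e.\ track the reciprocal — and show $\frac{1+N(s,k)}{1\vee V(s,k)}$-type quantities form the right object, mirroring the ratio-martingale in \citeA{barber15}.)

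The main obstacle is getting the martingale bookkeeping exactly right, in particular identifying which ratio is the (super)martingale and verifying the stopping-time property for the \emph{reversed} index. The subtlety is that $\hat k_{AS}$ is defined by a $\max$ over $k$, so as we run $k$ downward from $n$ the first time the constraint $\fdphat_{AS}(k;s,\lambda)\le q$ holds is $\hat k_{AS}$; one has to confirm this is measurable with respect to $\mathcal{G}_k$, which it is because $R(s,k)$ and $A(\lambda,k)$ depend only on indices $\le k$ and hence only on null assignments already ``revealed'' in the backward filtration (together with the fixed non-null data). The stochastic-dominance hypothesis on $F_0$ is handled by a standard coupling: replacing each null $p_i$ by $F_0^{-1}$ of a uniform only makes $I(p_i\le s)$ smaller and $I(p_i>\lambda)$ larger, both of which push $\mathrm{FDR}$ down, so it is enough to prove the bound when nulls are exactly $U[0,1]$. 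Everything else is the routine hypergeometric computation and an application of Doob's optional stopping theorem for nonnegative supermartingales, so no further difficulty is anticipated there.
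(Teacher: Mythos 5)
Your overall strategy (backward filtration, null-count supermartingale, optional stopping, binomial terminal bound) is the same as the paper's, but there is a concrete error at the heart of your step (v) that breaks the argument. You set $N(s,k)=\sum_{i\in\nullset,\,i\le k}I(p_i>s)$ and claim $A(\lambda,k)\ge A(s,k)\ge N(s,k)$, hence $1+N(s,\hat k_{AS})\le 1+A(\lambda,\hat k_{AS})$. The first inequality is backwards: since $s\le\lambda$, we have $I(p_i>\lambda)\le I(p_i>s)$, so $A(\lambda,k)\le A(s,k)$. In fact $N(s,k)$ can easily exceed $A(\lambda,k)$ (e.g.\ mostly-null uniform p-values with $s=0.1$, $\lambda=0.5$ give $N(s,k)\approx 0.9k$ versus $A(\lambda,k)\approx 0.5k$), so the chain
\[
\frac{V(s,\hat k_{AS})}{1+N(s,\hat k_{AS})}\cdot\frac{1+N(s,\hat k_{AS})}{R(s,\hat k_{AS})\vee 1}
\le M_{\hat k_{AS}}\cdot\frac{1+A(\lambda,\hat k_{AS})}{R(s,\hat k_{AS})\vee 1}
\]
fails, and the stopping-rule constraint (which involves $A(\lambda,\cdot)$, not the null count above $s$) cannot be brought in. The fix is to build the supermartingale with the two thresholds mismatched on purpose, exactly as the paper does: take $M(k)=\sum_{i\in\nullset,\,i\le k}I(p_i\le s)\big/\bigl(1+\sum_{i\in\nullset,\,i\le k}I(p_i>\lambda)\bigr)$. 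Then the comparison to $1+A(\lambda,\hat k_{AS})$ is trivial (nulls are a subset of all indices), the exchangeability computation with the three categories $p_i\le s$, $p_i\in(s,\lambda]$, $p_i>\lambda$ gives a \emph{super}martingale (not a martingale), and the terminal value satisfies $\E M(n)\le F_0(s)/(1-F_0(\lambda))\le s/(1-\lambda)$ via the binomial identity, which cancels the factor $(1-\lambda)q/s$ from the stopping rule.

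Two smaller points. First, your reduction of the stochastic-dominance hypothesis to the uniform case "by a standard coupling" is not justified as stated: raising the null p-values changes $A(\lambda,k)$, hence changes $\hat k_{AS}$ and $R(s,\hat k_{AS})$, and the FDP is not monotone in this perturbation. The paper avoids this by keeping $F_0$ general throughout (the conditional-symmetry argument only needs the nulls i.i.d.) and invoking dominance only at the end, through $F_0(s)\le s$ and $1-F_0(\lambda)\ge 1-\lambda$. Second, your stopping-time justification is stated backwards: $R(s,k)$ and $A(\lambda,k)$ depend on indices $\le k$, which are \emph{not} the ones revealed by the backward filtration; they become measurable only because, after conditioning on the non-null p-values and the null multiset, the totals are known and the counts for $i\le k$ are totals minus the revealed counts for $i>k$. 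That is repairable, but the inequality in step (v) is the genuine gap.
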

The proof of Theorem \ref{thm:as} is given in Appendix A. 

Another class of ordered testing procedures are {\em accumulation tests} (AT) \cite{Li15}, which include ForwardStop \cite{gsell15} and SeqStep \cite{barber15} as special cases. Accumulation tests estimate FDP via
\[\fdphat_{AT}(k) = \frac{1}{k}\sum_{i=1}^{k}h(p_{i}),\]
for some function $h\geq 0$ with $\int_{0}^{1}h(x)dx = 1$, and rejects all hypotheses $H_i$ with $i\le \hat{k}$ where
\[\hat{k} = \max\left\{k: \fdphat_{AT}(k)\le q\right\}.\]
ForwardStop corresponds to the case where $h(x) = -\log(1 - x)$ and SeqStep corresponds to the case where $h(x) = CI(x > 1 - 1/C)$ for some $C > 0$.

In terms of our framework, accumulation tests solve 
\[
  \max_{k\in \{0,\ldots,n\}} R(1,k) \quad \text{s.t.}\,\, \fdphat_{AT}(k)\le q.
\]
The main difference between AT and AS is that the former rejects all hypotheses before $\hat{k}$, while the latter rejects only those smaller than threshold $s$. This means that AT will have full power if $\hat{k} \to n$, while the power of AS or SS is at most the average probability that a non-null p-value is less than $s$. On the other hand, unless nearly all of the early hypotheses are non-null, AT is likely to stop very early, as we will explore in Section~\ref{sec:powerComp}.

\section{Asymptotic Power Calculation}\label{sec:asymptotic}
\subsection{Varying Coefficient Two-group (VCT) Model}
We now derive the asymptotic power of AS and SS under the following simple model:
\begin{definition}[Varying Coefficient Two-groups (VCT) Model]\label{def:VCT}
  An VCT$(F_{0}, F_{1}; \pi(\cdot))$ model is a sequence of independent p-values $p_{i}\in [0, 1]$ such that
\[p_{i}\sim \lb 1 - \pi\lb i/n \rb\rb F_{0} + \pi\lb i/n \rb F_{1}\]
for some distinct distributions $F_{0}$ and $F_{1}$ and a non-negative function $\pi(t): [0, 1]\rightarrow [0, 1]$. $F_{0}$ and $F_{1}$ are the null and non-null distributions and $\pi(t)$ is the local non-null probability for $k= nt$.
\end{definition}
For simplicity, we will take $F_0$ to be uniform. Following \citeA{gen06}, we also assume that $F_1$ is strictly concave, so the density $f_1$ of non-null p-values is strictly decreasing; in other words, smaller p-values imply stronger evidence against the null.

The cumulative non-null probability $\Pi(t)$ is
\[\Pi(t) = \frac{1}{t}\int_{0}^{t}\pi(s)ds.\]
The quantity $\Pi(t)$ is essential to our results. It can be regarded as the average proportion of non-null hypotheses in the first $nt$-hypotheses since
\[\frac{\#\{i\le nt: i\not\in \mathcal{H}_{0}\}}{nt} \approx \frac{1}{t}\int_{0}^{t}\pi(s)ds = \Pi(t).\]
Our setting is very similar to that of \citeA{Li15} except that they impose conditions on $\Pi(t)$ directly. Proposition 1 in Appendix B reveals the relation between the VCT model and the assumptions of \citeA{Li15}. 

\subsection{Asymptotic Power for AS and SS}\label{sec:pow_as}

Because the SS method is a special case of the AS method with $\lambda=s$, it is sufficient to analyze the general case of AS. Assuming a VCT model, for large $n$, we have 
\begin{align*}
&\fdphat_{AS}(\lfloor nt\rfloor) = \frac{s}{1 - \lambda}\cdot\frac{1 + A(\lambda, \lfloor nt\rfloor)}{R(s, \lfloor nt\rfloor)}\\[4pt]
\approx
&\frac{s}{1 - \lambda}\cdot \frac{(1 - \Pi(t))(1 - \lambda) + \Pi(t)(1 - F_{1}(\lambda))}{(1 - \Pi(t))s + \Pi(t)F_{1}(s)}\\[4pt]
=& \frac{1 + \Pi(t)\left(\frac{1 - F_{1}(\lambda)}{1-\lambda} - 1\right)}
{1 + \Pi(t)\left(\frac{F_{1}(s)}{s}-1\right)} \triangleq \fdp_{AS}^*(t).
\end{align*}
Because $F_{1}$ is strictly concave, we have $\frac{1 - F_{1}(\lambda)}{1 - \lambda} < 1 < \frac{F_1(s)}{s}$, and $\fdp_{AS}^*(t)$ is a strictly decreasing function of $\Pi(t)$. Thus, in the limit, $\fdphat_{AS}(k)$ is determined by the fraction of non-nulls $\Pi(t)$, with more non-nulls leading to a lower estimate of FDP. Setting $\fdp_{AS}^*(t)=q$ and solving for $\Pi(t)$, we obtain the critical non-null fraction
\begin{equation}\label{eq:chi_as}
\chi_{AS}(s, \lambda, q, F_1) = \frac{1 - q}{1 - \frac{1 - F_{1}(\lambda)}{1 - \lambda} + q\lb\frac{F_{1}(s)}{s} - 1\rb}.
\end{equation}

\begin{figure}[H]
\begin{center}
    \includegraphics[width = 0.7\columnwidth]{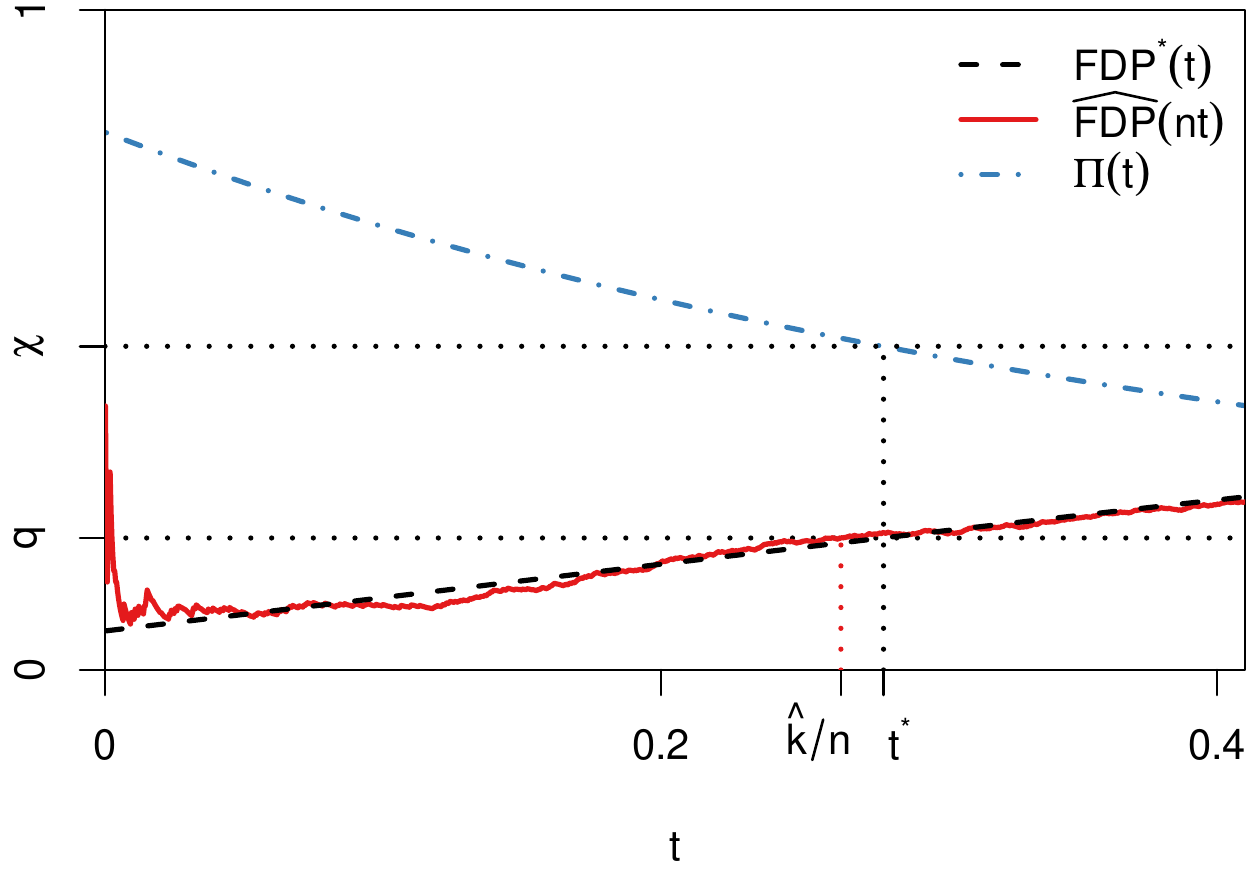}
    \caption{Illustration of the asymptotic behavior of AS. The broken curves show population limits for the simulation of Section~\ref{sec:powerComp}, with parameters set to $\lambda=0.5,s=q=\gamma=0.2,\mu=2,b=5$. At $t_{AS}^*$, $\Pi(t)=\chi_{AS}$, leading to $\fdp_{AS}^*(t)=q$. The red curve shows a realization of $\fdphat_{AS}(nt)$ with $n=3000$.}\label{fig:asymp}
  \end{center}
\end{figure}

If $\Pi(k/n) > \chi_{AS}$ then, with high probability, we will have $\fdphat_{AS}(k) \leq q$, implying $\hat{k}_{AS} \geq k$. The proportion $\hat{k}_{AS} / n$ of scanned hypotheses is approximately
\begin{equation}\label{eq:T_as}
t_{AS}^* = \max\{t: \Pi(t)\ge \chi_{AS}\},
\end{equation}
and the realized power is approximately
\begin{align}
\mathrm{Pow}_{AS} & = \frac{\#\{i\le \hat{k}: i\not\in \nullset, p_{i}\le s\}}{\#\{i\le n: i\not\in \nullset\}}\nonumber\\ 
& = \frac{\hat{k}}{n}\cdot\frac{\#\{i\le \hat{k}: i\not\in \nullset, p_{i}\le s\} / \hat{k}}{\#\{i\le n: i\not\in \nullset\}/ n}\nonumber\\
& \approx  F_1(s)\cdot \frac{t^*_{AS}\Pi(t^*_{AS})}{\Pi(1)}.\label{eq:power_as}
\end{align}
Theorem~\ref{thm:power_as} confirms our heuristic approximations.
\begin{theorem}\label{thm:power_as}
Consider a VCT model with
\begin{itemize}
\item $\Pi(t)$ is strictly decreasing and Lipschitz on $[0, 1]$ with $\Pi(1) > 0$;
\item $F_{0}$ is the uniform distribution on $[0, 1]$ and $f_{1} = F_{1}'$ is strictly decreasing on $[0, 1]$.
\end{itemize}
Then $\hat{k}_{AS} / n\stackrel{a.s.}{\rightarrow}t^*_{AS}$ and
\[\mathrm{Pow}_{AS} \stackrel{a.s.}{\rightarrow}F_{1}(s) \cdot\frac{t^*_{AS}\Pi(t^*_{AS})}{\Pi(1)} 
= F_{1}(s) \cdot\frac{\int_0^{t^*_{AS}}\pi(u)du}{\int_0^{1}\pi(u)du},\]
with $t_{AS}^*$ defined as in~\eqref{eq:T_as}.
\end{theorem}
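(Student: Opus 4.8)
The plan is to prove the theorem in three stages: (i) establish uniform almost‑sure convergence of the relevant empirical counting processes $A(\lambda,\lfloor n\cdot\rfloor)/n$, $R(s,\lfloor n\cdot\rfloor)/n$, and the non‑null discovery count to their deterministic population limits; (ii) deduce $\hat k_{AS}/n\stackrel{a.s.}{\rightarrow}t^*_{AS}$ from the strict monotonicity of $\fdp_{AS}^*$; and (iii) transfer this to the power by a continuous‑mapping argument evaluated at the random stopping index. Throughout, write $N(k):=\#\{i\le k:\ i\notin\nullset,\ p_i\le s\}$ for the numerator count appearing in $\mathrm{Pow}_{AS}$.

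For stage (i), fix $t\in[0,1]$ and set $m=\lfloor nt\rfloor$. The increments $I(p_i>\lambda)$, $I(p_i\le s)$ and $I(\{i\notin\nullset\}\cap\{p_i\le s\})$ are independent across $i$ and bounded, with means $(1-\lambda)+\pi(i/n)(\lambda-F_1(\lambda))$, $s+\pi(i/n)(F_1(s)-s)$, and $\pi(i/n)F_1(s)$ respectively, using $F_0=U[0,1]$. By Kolmogorov's strong law for independent summands of uniformly bounded variance, the centered normalized count tends to $0$ a.s.\ for each fixed $t$; and since $\int_0^t\pi=t\Pi(t)$ is Lipschitz, the Riemann sums $\tfrac1n\sum_{i\le m}\pi(i/n)$ converge to $t\Pi(t)$. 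Hence, for each fixed $t$, $\tfrac1n A(\lambda,\lfloor nt\rfloor)\stackrel{a.s.}{\rightarrow}t\big[(1-\Pi(t))(1-\lambda)+\Pi(t)(1-F_1(\lambda))\big]$, $\tfrac1n R(s,\lfloor nt\rfloor)\stackrel{a.s.}{\rightarrow}t\big[(1-\Pi(t))s+\Pi(t)F_1(s)\big]$, and $\tfrac1n N(\lfloor nt\rfloor)\stackrel{a.s.}{\rightarrow}F_1(s)\,t\Pi(t)$. Each of these three processes is nondecreasing in $t$ and each limit is continuous in $t$, so intersecting the convergence events over a countable dense set of $t$'s and invoking the standard monotone‑limit (P\'{o}lya / Glivenko--Cantelli‑type) upgrade yields convergence that is uniform over $t\in[0,1]$, almost surely.

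For stage (ii), fix $\delta>0$; the limit of $\tfrac1n R(s,\lfloor nt\rfloor)$ is bounded below by $\delta s>0$ on $[\delta,1]$, so for all large $n$ we have $R(s,\lfloor nt\rfloor)\vee 1=R(s,\lfloor nt\rfloor)$ there, and dividing the uniform limits of numerator and denominator gives $\fdphat_{AS}(\lfloor nt\rfloor)\to\fdp_{AS}^*(t)$ uniformly on $[\delta,1]$. As noted in the text, $\fdp_{AS}^*$ is strictly decreasing in $\Pi(t)$, which is strictly decreasing in $t$ by hypothesis, so $\fdp_{AS}^*$ is continuous and strictly increasing; thus $\{t:\fdp_{AS}^*(t)\le q\}=(0,t^*_{AS}]$ (the degenerate case $\fdp_{AS}^*>q$ throughout is handled below). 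Assume $t^*_{AS}\in(0,1)$ and fix $\epsilon\in(0,t^*_{AS})$; using uniform convergence on $[\,t^*_{AS}-\epsilon,1\,]$ together with $\fdp_{AS}^*(t^*_{AS}-\epsilon)<q$ and $\fdp_{AS}^*(t)\ge\fdp_{AS}^*(t^*_{AS}+\epsilon)>q$ for $t\ge t^*_{AS}+\epsilon$, we get, for all large $n$ a.s., that $\fdphat_{AS}$ is below $q$ at $t^*_{AS}-\epsilon$ and strictly above $q$ on $[\,t^*_{AS}+\epsilon,1\,]$; the former forces $\hat k_{AS}\ge n(t^*_{AS}-\epsilon)$ and the latter forces $\hat k_{AS}<n(t^*_{AS}+\epsilon)$. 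Letting $\epsilon\downarrow 0$ gives $\hat k_{AS}/n\stackrel{a.s.}{\rightarrow}t^*_{AS}$. (If instead $\fdp_{AS}^*(t)>q$ for all $t\in(0,1]$, i.e.\ $\Pi(t)<\chi_{AS}$ everywhere and $t^*_{AS}=0$ by convention, the same uniform‑convergence argument on each $[\delta,1]$ shows $\hat k_{AS}/n\to 0=t^*_{AS}$; the endpoint case $t^*_{AS}=1$ is likewise immediate since the upper bound on $\hat k_{AS}/n$ is vacuous there.)

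For stage (iii), $\mathrm{Pow}_{AS}=N(\hat k_{AS})\big/\#\{i\le n:i\notin\nullset\}$. The denominator divided by $n$ converges a.s.\ to $\int_0^1\pi=\Pi(1)>0$ by the strong law. For the numerator, combine the uniform a.s.\ convergence $\tfrac1n N(\lfloor n\cdot\rfloor)\to F_1(s)(\cdot)\Pi(\cdot)$ with the continuity of that limit and with $\hat k_{AS}/n\to t^*_{AS}$: $\big|\tfrac1n N(\hat k_{AS})-F_1(s)t^*_{AS}\Pi(t^*_{AS})\big|\le\sup_t\big|\tfrac1n N(\lfloor nt\rfloor)-F_1(s)t\Pi(t)\big|+\big|F_1(s)(\hat k_{AS}/n)\Pi(\hat k_{AS}/n)-F_1(s)t^*_{AS}\Pi(t^*_{AS})\big|\stackrel{a.s.}{\rightarrow}0$. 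Dividing the two limits yields $\mathrm{Pow}_{AS}\stackrel{a.s.}{\rightarrow}F_1(s)\,t^*_{AS}\Pi(t^*_{AS})/\Pi(1)=F_1(s)\int_0^{t^*_{AS}}\pi(u)du\big/\int_0^1\pi(u)du$, as claimed. The main obstacle is stage (i): the increment distributions form a triangular array (they depend on $n$ through $\pi(i/n)$), so the strong law must be applied row‑wise with a uniform variance bound, and the passage from pointwise to uniform convergence must be obtained from monotonicity in $t$ rather than from equicontinuity; a secondary subtlety, circumvented above by working on $[\delta,1]$, is the $R\vee 1$ truncation, which makes $\fdphat_{AS}$ ill‑behaved for small $k$ but irrelevant to $\hat k_{AS}$ once $t^*_{AS}>0$.
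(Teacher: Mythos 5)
Your proposal is correct in substance and follows a genuinely different technical route from the paper's. Where the paper controls the $k$-averaged quantities $\frac1k\sum_{i\le k}(\cdot)$ via a maximal inequality (its Lemma~1, a subgaussian tail bound summed over $k\ge a_n$, with the cutoff $a_n=\lceil(\log n)^2\rceil$ needed because such averages are unstable for small $k$), you normalize the counting processes $A(\lambda,\lfloor nt\rfloor)$, $R(s,\lfloor nt\rfloor)$ and $N(\lfloor nt\rfloor)$ by $n$; these are nondecreasing in $t$ with continuous limits, so the P\'olya/Glivenko--Cantelli upgrade gives uniformity on all of $[0,1]$ and the small-$k$ issue is absorbed into the harmless restriction to $[\delta,1]$ where the $R\vee1$ truncation matters. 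Likewise, for the power the paper factors $\mathrm{Pow}_{AS}$ into four ratios and invokes a law of large numbers at the random index $\hat k_{AS}$ (which implicitly needs a uniform-in-$k$ statement), whereas you prove uniform convergence of $N(\lfloor nt\rfloor)/n$ to $F_1(s)\,t\Pi(t)$ once and evaluate it at the random time; this is cleaner. Your case analysis for $t^*_{AS}=0$, interior $t^*_{AS}$, and $t^*_{AS}=1$ mirrors the paper's three cases.

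The one step to repair is the justification of pointwise a.s.\ convergence in stage~(i). Since the law of $p_i$ depends on $n$ through $\pi(i/n)$, the summands form a triangular array, and Kolmogorov's strong law---even ``applied row-wise with a uniform variance bound''---does not give almost sure convergence along $n$: a variance bound only yields Chebyshev tails of order $1/n$, which are not summable, and there is no single infinite i.i.d.-type sequence to which the classical SLLN applies. Because your increments are bounded indicators, the fix is immediate: Hoeffding's inequality gives a tail bound of order $e^{-cn\epsilon^2}$ for each fixed $t$, summable in $n$, so Borel--Cantelli delivers the a.s.\ statement; this is precisely the role played by the paper's Lemma~1. With that substitution (and the mild regularity of $\pi$, shared with the paper, needed for the Riemann sums $\frac1n\sum_{i\le\lfloor nt\rfloor}\pi(i/n)$ to converge to $\int_0^t\pi(u)du$), your argument goes through.
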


Interpreting~(\ref{eq:T_as}--\ref{eq:power_as}), we see that if $\Pi(1)>\chi_{AS}$ then $t_{AS}^*=1$ and $\hat{k}_{AS}=n$ with high probability: all $p_i<s$ are rejected and power is roughly $F_1(s)$. Conversely, if $\chi_{AS} > \sup_{t\in[0,1]} \Pi(t)$ then $\hat{k}_{AS} = o_p(n)$ and the method is asymptotically powerless. Figure~\ref{fig:asymp} illustrates an intermediate case with $0<t_{AS}^*<1$.

From Theorem~\ref{thm:power_as} we see there are two ways to increase the asymptotic power: either increase $s$ (which we can do directly), or increase $t_{AS}^*$. To increase $t_{AS}^*$ we must {\em decrease} $\chi_{AS}(s,\lambda,q,F_1)$, which itself is increasing in $s$ and decreasing in $\lambda$. 

Increasing $\lambda$ always increases the asymptotic power. Because SS is a special case of AS, this implies that SS can always be improved by increasing $\lambda$ above $s$, yielding a less biased estimator of the null fraction. Note, however, that taking $\lambda \to 1$ is not practical in finite samples: we still need large enough $A(\lambda, k)$ for the estimator to be stable.

Increasing $s$ has an ambiguous effect on the asymptotic power. A smaller $s$ leads to a smaller $\chi_{AS}$, and therefore a larger stopping index $\hat{k}_{AS}$; however, it also applies a more stringent rejection threshold for hypotheses with $i\leq \hat{k}_{AS}$. By contrast, larger $s$ is more liberal for $i \leq \hat{k}_{AS}$ but tends to give smaller $\hat{k}_{AS}$. If $s$ is too large, $\chi$ could even exceed $\Pi(0)$, leading to a total loss of power. Small $s$ avoids this catastrophe: if $\lim_{s\to 0}F_1(s)/s = \infty$ then $\lim_{s\to 0}\chi_{AS}= 0$. This implies that we can always have nonzero power if we take $s$ small enough, but the power can never exceed $F_1(s)$ even if $\hat{k}_{AS}\approx n$. Intuitively, then, using a large $s$ is more aggressive, gambling that $\Pi$ is large enough to overcome the larger value of $\chi_{AS}$.

\subsection{Asymptotic Power for AT}
For AT, \citeA{Li15} prove that
\begin{equation}
  \label{eq:power_at}
  \mathrm{Pow}_{AT} \stackrel{a.s.}{\rightarrow} \frac{t^{*}_{AT}\Pi(t^{*}_{AT})}{\Pi(1)}
\end{equation}
where $t^{*}_{AT} = \max\{t: \Pi(t) \ge \chi_{AT}\}$, where
\begin{equation}\label{eq:chi_at_general}
\chi_{AT}(h, q, F_{1}) = \frac{1 - q}{1 - \nu},
\end{equation}
and $\nu = \E_{p\sim F_{1}}h(p)$. They also show that SeqStep, which uses $h(x) = CI(x > 1 - 1 / C): C \in (0, 1)$, is most powerful among all accumulation tests with $h$ bounded by $C$. Reparameterizing with $\lambda=1-1/C$, we can write $h(x) = \frac{1}{1 - \lambda}I(x > \lambda)$. Then, $\nu = \frac{1 - F_{1}(\lambda)}{1 - \lambda}$ and
\begin{equation}\label{eq:chi_at}
\chi_{AT} = \frac{1 - q}{1 - \frac{1 - F_{1}(\lambda)}{1 - \lambda}}.
\end{equation}
Comparing (\ref{eq:chi_at}) with (\ref{eq:chi_as}) and recalling that $F_{1}(s) > s$ by concavity, we see that $\chi_{AS} < \chi_{AT}$. Therefore, $t^{*}_{AS}\ge t^{*}_{AT}$, implying that AT will tend to stop earlier than AS. Even so, AT could be more powerful due to the extra factor $F_{1}(s)$ in (\ref{eq:power_as}) which is absent from (\ref{eq:power_at}). If $f_1(x)=F_1'(x)$, we have
\[\nu = \frac{\int_{0}^{1}h(p)f_{1}(p)dp}{\int_{0}^{1}h(p)dp}\ge \inf_{x\in [0, 1]}f_{1}(x),\]
where the last term equals $f_{1}(1)$ if $F_{1}$ is strictly concave. Thus, for any choice of $h$ (bounded or otherwise), we have $\chi_{AT} \ge \frac{1-q}{1-f_1(1)}$.

\section{Power Comparisons}\label{sec:powerComp}

In this section we analyze the results of Section~\ref{sec:asymptotic} to extract further information about when each of AS, SS, and AT performs better or worse, and how and when the choice of $s$ affects the performance of AS. There are three salient features of the VCT model to consider:
\begin{description}
\item[Signal density] $\Pi(1) = \int_0^1\pi(t)dt$ gives the expected total number of nulls. Note $\Pi(1)=1-\pi_0$.
\item[Signal strength] If the non-null p-values tend to be very small, we say the signals are strong.
\item[Quality of the ordering] If the prior information is very good then $\Pi(t)$ is steep, with $\Pi(0)=1$ in the limit of very good information; if the prior ordering is completely useless then $\Pi(t) = \Pi(1)$ for all $t$.
\end{description}

First, note that if signals are very strong, then most of the non-null p-values are close to 1. In that case,
\[
\frac{1-F_1(\lambda)}{1-\lambda} \approx 0 \;\;\Rightarrow\;\; \chi_{AS} \approx \frac{1 - q}{1 + q\lb\frac{F_{1}(s)}{s} - 1\rb},
\]
even for relatively small values of $\lambda$, possibly including $\lambda=s$. As a result, $\lambda$ plays a very small role in determining $\chi_{AS}$ and AS will behave similarly to SS. By contrast, if the signals are weaker, the difference is greater.

Second, if the ordering is very good, with $\Pi(0)\approx 1$ and $\Pi(t)$ correspondingly very steep, then we can afford to use a larger $s$ for the AS procedure without worrying that $\chi_{AS}>\Pi(0)$ (though we still cannot allow $\chi_{AS}$ to exceed 1). By contrast, if the ordering is poor and $\Pi(t)$ is very flat, then a small change in $s$ could move $\chi_{AS}$ from below $\Pi(1)$ (for which $\hat{k}_{AS}\approx n$) to above $\Pi(0)$ (for which $\hat{k}_{AS} = o_p(n)$), and so we are forced to be very cautious.

Finally, examining~\eqref{eq:chi_at_general}, we see that AT is highly aggressive compared to AS. Suppose $q=0.1$. Then, regardless of the choice of $h$, AT is powerless unless at least $90\%$ of the early hypotheses are non-null, requiring that either the signals are very dense or the ordering is very informative. In addition, the signals must be quite strong: even if $\Pi(0)=1$, AT is asymptotically powerless unless
\[
\nu = \E_{p\sim F_{1}}h(p) < q \ll 1 = \E_{p\sim F_0}h(p).
\]

\begin{figure}[ht]
  \begin{center}
    \includegraphics[width = 0.7\columnwidth]{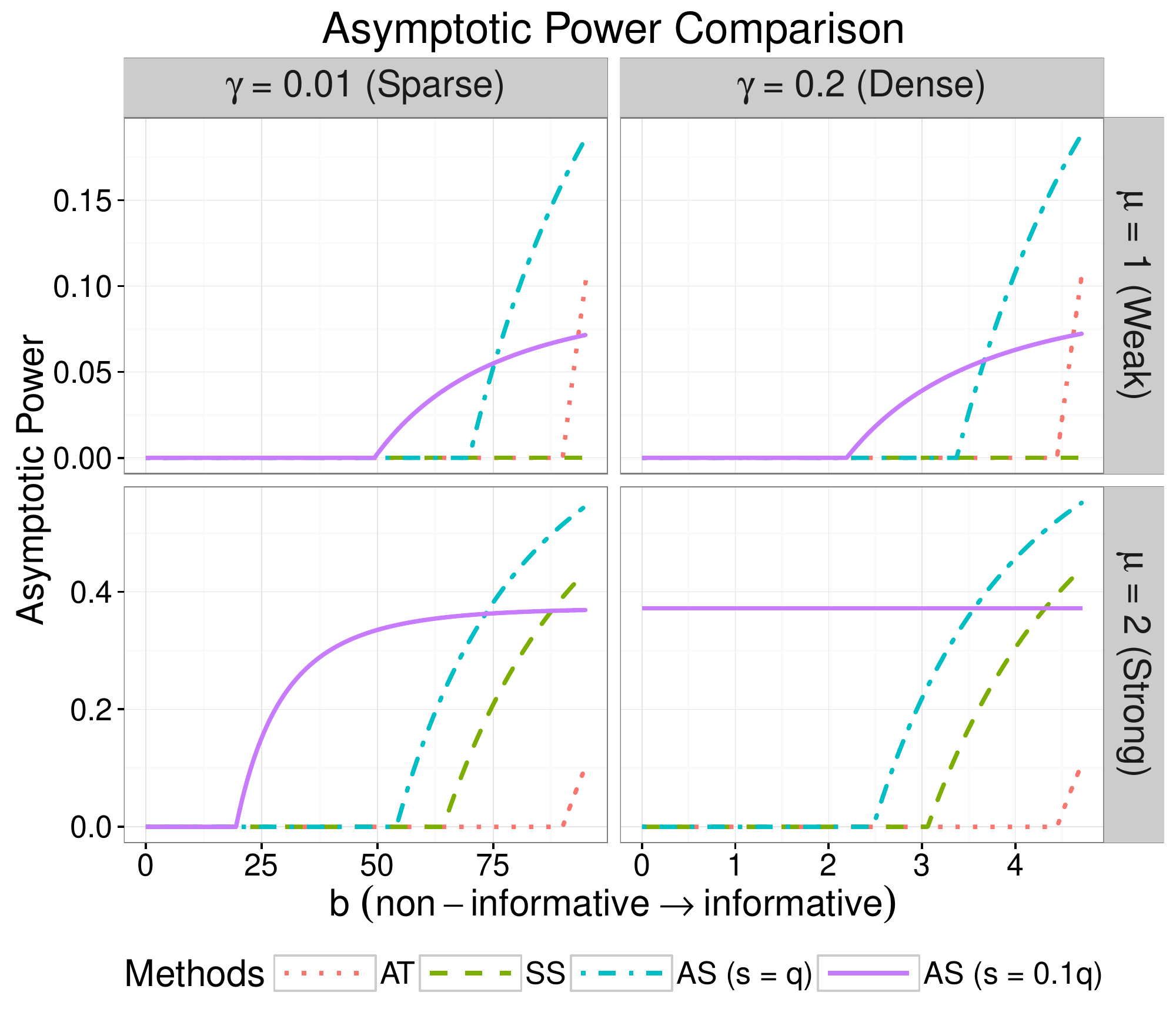}
   \end{center}
\caption{Asymptotic Power of AS (with $s = q$ and $s = 0.1q$), SS (with $s = q$) and AT (with $\nu = 0$) under four regimes: (sparse/weak) $\gamma = 0.01, \mu = 1$; (sparse/strong) $\gamma = 0.01, \mu = 2$; (dense/weak) $\gamma = 0.2, \mu = 1$; (dense/strong) $\gamma = 0.2, \mu = 2$. The x-axis measures $b$ and a larger $b$ corresponds to a more informative ordering.} \label{fig:power}
\end{figure}

\subsection{Numerical Results}

We now illustrate the above comparisons with a numerical example. We consider the VCT model where $F_{0}$ is uniform and $F_{1}$ is the distribution of one-tailed p-values from a normal test. That is, $p = \bar\Phi(z) = 1-\Phi(z)$ where $z\sim N(\mu,1)$ and $\Phi$ is the standard normal CDF. Thus,
\[F_{1}(x) = \bar\Phi(\bar\Phi^{-1}(x) - \mu),\]
with $\mu$ determining the signal strength.

For the local non-null density, we take
\[
\pi(t) = \gamma e^{-bt}\cdot\frac{b}{1-e^{-b}}, \quad \gamma\in (0,1), \;\;b> 0.
\]
The factor $b/(1-e^{-b})$ is a normalization constant guaranteeing $\Pi(1)=\int_0^1\pi(t)dt = \gamma$. Thus, $\gamma$ determines the signal density, while $b$ determines the quality of the prior ordering, with a larger $b$ corresponding to a better ordering and $b\to 0$ corresponding to a useless ordering. $b$ is implicitly upper-bounded by the constraint $\pi(0) = \gamma \cdot \frac{b}{1-e^{-b}} \leq 1$; let $b_{\max}$ denote the maximal value.

Figure~\ref{fig:power} shows the asymptotic power for four methods, all using $q=0.1$: AS with $s = q$ and $\lambda=0.5$, AS with $s = 0.1q$ and $\lambda=0.5$, SS with $s=q$, and AT. AT is not implemented with a specific $h$, but rather with  $\nu = 0$, giving the best possible power that any $h$ could achieve. Four regimes are shown corresponding to two levels each of $\mu$ and $\gamma$: $\mu = 1$ (weak signals) vs. $\mu = 2$ (strong signals), and $\gamma=0.01$ (sparse signals) vs. $\gamma = 0.2$ (dense signals). In each regime, we plot the asymptotic power of each method for $b\in(0,b_{\max}]$.

Unsurprisingly, all of the methods perform better with stronger, denser signals and better prior orderings, but their sensitivities to these parameters are quite different. Comparing the two AS methods, we see that smaller (less aggressive) $s$ makes the method less sensitive to the ordering quality: its power is usually positive, but it is outperformed by AS($s=q$) when the ordering is excellent. AT is even more aggressive than the other two, and is asymptotically powerless unless the ordering is excellent.

SS is dominated by AS($s=q$) in all cases, as predicted, but the improvement is less dramatic when the signals are strong; in that case $\frac{1-F_1(\lambda)}{1-\lambda} \approx 0.05$ and $\frac{1-F_1(s)}{1-s} \approx 0.26$ are both small compared to $1+q\left(\frac{F_1(s)}{s}-1\right) \approx 1.66$.

\section{Selection of Parameters}\label{sec:parameters}

\subsection{Selecting $\lambda$}
As explained in Section~\ref{sec:pow_as}, a large $\lambda$ reduces $\chi_{AS}$ and improves asymptotic power. However, in finite samples, the procedure will be unstable if $\lambda$ is too close to 1. One natural suggestion is to set $\lambda = 0.5$, analogous to Storey's suggestion for the Storey-BH procedure \cite{storey04}.

\subsection{Selecting $s$}
As discussed in Section~\ref{sec:pow_as}, $s$ has an ambiguous effect on the asymptotic power. The oracle choice $s^*$, which maximizes asymptotic power, is unknown in practice and depends on knowing parameters like $\Pi(t)$ and $F_1(x)$.
Although we could plug in estimators of the parameters $b$ and $\mu$, or simply choose the value of $s$ giving us the largest power on our data, the validity of such procedures would not be guaranteed by our results here.

In our view $s>q$ is intuitively unappealing because it would mean using a more liberal rejection cutoff than unadjusted marginal testing. We suggest $s=q$ as a heuristic, moderately aggressive default. This will give non-zero power as long as
\begin{equation}\label{eq:ruleofthumb}
\frac{F_{1}(q)}{1 - q} > \frac{1 - \Pi(0)}{\Pi(0)}.
\end{equation}
\eqref{eq:ruleofthumb} can be easily derived from (\ref{eq:chi_as}) and (\ref{eq:T_as}), provided $\lambda$ is close to 1 such that $\frac{1 - F_{1}(\lambda)}{1 - \lambda}\approx 0$, and is not too stringent. For example, if $q = 0.1$, $F_{1}(0.1) \geq 0.5$, then~\eqref{eq:ruleofthumb} holds provided $\Pi(0) > 0.64$. That is, if the non-nulls have reasonably strong signal and most of the early p-values are non-null, then $s=q$ is small enough. If we do not find these values of $F_1(0.1)$ and $\Pi(0)$ plausible, we can repeat this reasoning for smaller values of $s$ until we arrive at assumptions we do find plausible.

\subsection{Finite Sample Performance}
Now we evaluate the finite sample performances of the above two heuristics for $\lambda$ and $s$. Figure \ref{fig:finite} displays the distribution of realized power for AS using $\lambda=0.5$ vs. $\lambda=0.95$, and $s=q$ vs. $s=s^*$. We set $q = 0.1, \gamma = 0.2, \mu = 2, b = 3.65$, in which case $\Pi(0) = 0.75, \Pi(1) = 0.2$. Each panel shows power for $n = 100, 500, 1000,$ and $10,000$. For each setting we simulate 500 realizations of the fraction of all non-nulls that the method discovers. It is clear that large $\lambda$ is less stable especially when $n$ is small.

\begin{figure}[ht]
  \centering
  \includegraphics[width = 0.7\columnwidth, height = 0.5\columnwidth]{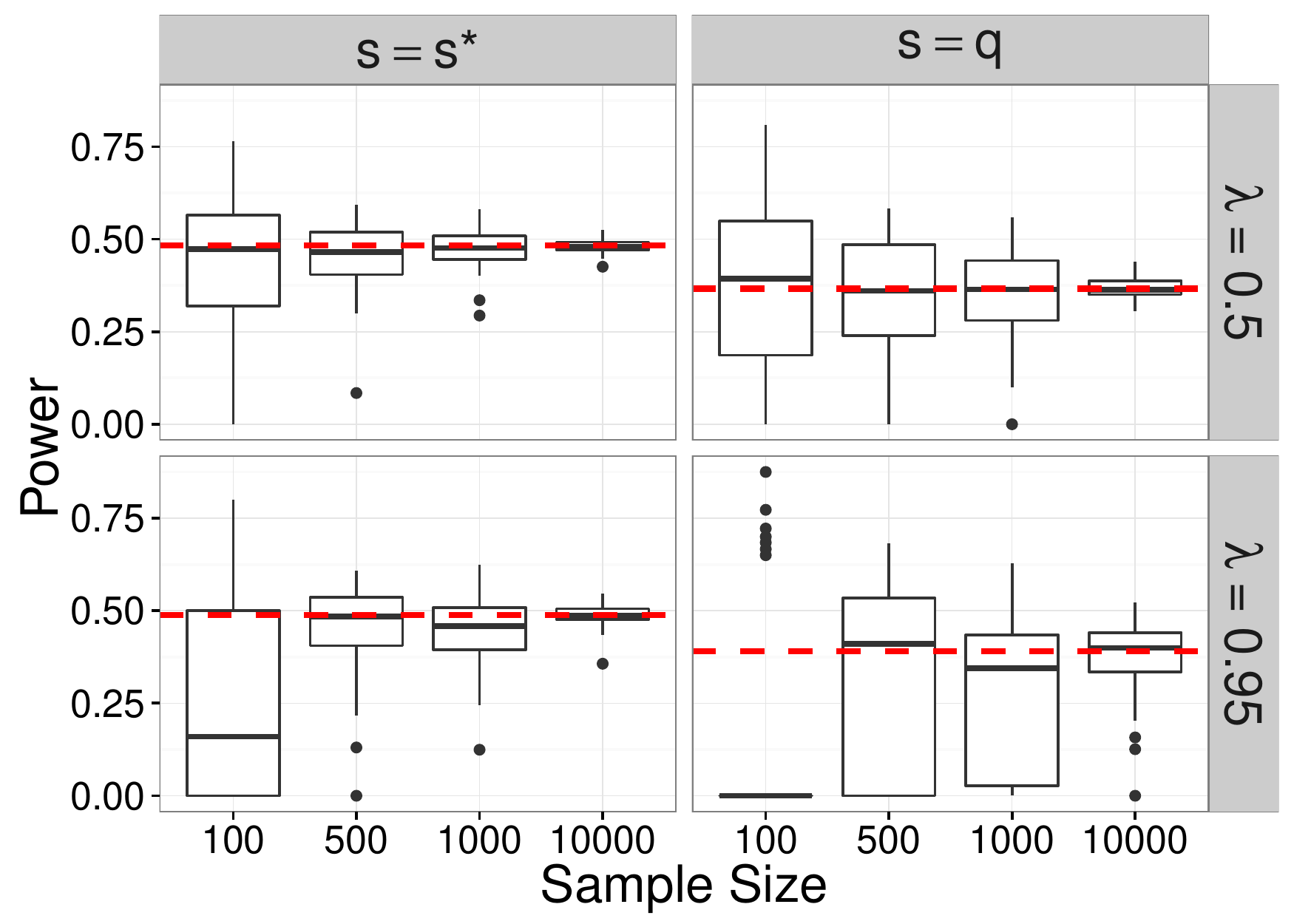}
  \caption{Finite-sample power using $s\in\{s^{*}, q\}$ and $\lambda\in \{0.5, 0.95\}$. Red dashed line corresponds to the asymptotic power.}\label{fig:finite}
\end{figure}

We see from Figure~\ref{fig:finite} that the performance of $\lambda = 0.5$ is more stable than that of $\lambda = 0.95$. On the other hand, the choice $s = q$ has a comparable power to the oracle approach. This justifies the simple choice $s = q$ as a moderately aggressive default choice for fairly strong signals and a good prior ordering; note however that if the signals were much weaker or the ordering much worse, $s=q$ could be powerless.

\section{Data Example: Dosage Response Data}\label{sec:dosage}

\citeA{Li15} analyzed the performance of several ordered testing methods using the GEOquery data of \citeA{davis2007geoquery}. In this section we reproduce and extend their analysis, adding the SS and AS methods as competitors. Where possible we have re-used the R code provided by \citeA{Li15} at their website.

The GEOquery data consist of gene expression measurements at 4 different dosage levels of an estrogen treatment for breast cancer, plus control (no dose). At each of the 5 dosage levels, the gene expression of $n=22,283$ genes is measured in 5 trials. The problem is to test whether each gene is differentially expressed in the lowest dosage versus the control condition, while using data from other dosage levels to obtain a prior ordering on the genes.

For each gene, \citeA{Li15} carry out a $t$-test comparing expression under the highest dose versus the expression under the lowest dose and control, pooled together. Let $\widetilde T_i$ denote the $t$-statistic and $\tilde p_i$ the p-value for gene $i$ using the high-dose data. Next, they compute one-sided permutation p-values $p_i$ comparing lowest dose to control, using the sign of $\widetilde T_i$ to determine which side. Finally, they order the p-values $p_1,\ldots,p_n$ according to the ordering of $\tilde p_1,\ldots, \tilde p_n$ and apply an ordered testing procedure. For a more detailed explanation of the experiment, see \citeA{Li15}. 

The top panel of Figure~\ref{fig:dosage} reproduces Figure~6 of \citeA{Li15} (with different axis limits), but including the SS and AS procedures analyzed here as competing methods. Both the HingeExp and AS methods perform quite well compared to the other methods, with SS coming in third place. In light of the foregoing theory, we can conclude that the high-dose data are doing an excellent job discriminating between null and non-null hypotheses --- for example, the HingeExp method rejects the first 600 hypotheses at the $q=0.1$ level, essentially implying that at least 540 of the first 600 genes in the ordering are truly non-null. The BH and Storey-BH methods, which are performed without any regard to the (highly informative) ordering, are unable to make any rejections.

For the lower panel of Figure~\ref{fig:dosage}, we repeat the same analysis, but with one change: instead of comparing with the highest dose to obtain $\widetilde T_i$, we instead compare with the second-lowest dose. This has the affect of attenuating the signal strength of $\widetilde T_i$, and thereby deteriorating the quality of the prior ordering. With a weaker ordering, all of the AT methods suffer major losses in power, so that the AS method is the clear winner, with SS in second place. As before, the BH and Storey methods have no power. This panel confirms the message of our theoretical analysis that AS and SS are more robust to weaker orderings.

\begin{figure}[ht]
  \centering
  \includegraphics[width = 0.7\columnwidth]{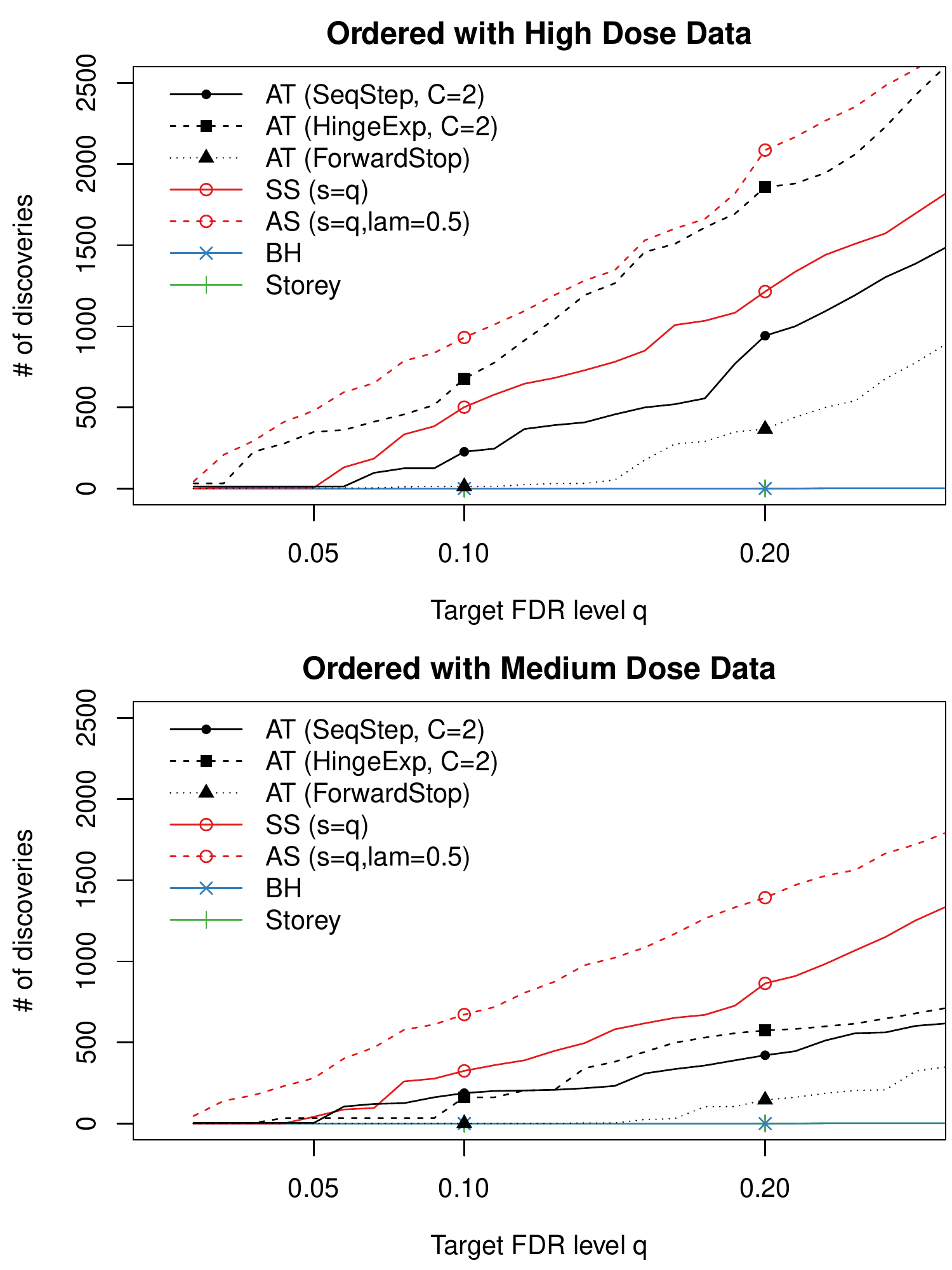}
  \caption{Power of AS, SS, and several AT methods on the dosage response data analyzed by \protect\citeA{Li15}.}\label{fig:dosage}
\end{figure}

\section{Conclusions and Future Directions}\label{sec:conclusions}
We have proposed Adaptive SeqStep (AS), which extends Selective SeqStep (SS) and improves on it in a manner analogous to Storey's improvement on the BH procedure. We have shown it controls FDR exactly in finite samples and analyzed its asymptotic power in detail, using the varying coefficient two-groups (VCT) model as a benchmark for comparing ordered testing procedures. For VCT models, we show that AS dominates SS asymptotically and outperforms AT except possibly in regimes with very good hypothesis ordering and very strong signals. Note that perfect ordering of hypotheses is implicit in the mathematical structure of many problems such as sequential goodness-of-fit testing; as a result AT could still be a suitable choice for these. Although we have proposed the heuristic $s=q$ for selecting $s$, it would be interesting to investigate whether there is a good way to estimate a good $s$ from the data.

A natural extension of AS is to allow $s$ and $\lambda$ to be different across the hypotheses. Intuitively, for those which have a higher chance to be non-null, we could use a more liberal threshold. Once the conditions for exact FDR control are established, we can derive the ``optimal'' $s$-sequence and $\lambda$-sequence under the asymptotic framework. We leave this as future work.

Another interesting direction is to compare AS and AT with BH-type methods. \citeA{gen02} has obtained the explicit formula for the power of BH procedure and it is not hard to obtain it in our more framework. The comparison should reveal more similarities and differences between these two genres.

Finally, AS is a natural fit for the ``multiple knockoffs'' extension of the knockoffs procedure suggested at the end of~\citeA{barber15}. Because the original knockoff procedure only produces 1-bit p-values, AS and SS are essentially equivalent, with $s=\lambda=0.5$ the most natural settings of those parameters. However, a multiple-knockoff procedure could yield p-values lying in $\{\frac{i}{k+1}: i= 1, 2, \ldots, k + 1\}$ by using $k$ knockoffs for each predictor variable. It would be interesting to see whether using the AS instead of SS procedure would give a meaningful improvement.

\section*{Acknowledgments}

We thank the anonymous reviewers for their helpful comments, which greatly improved this work.

\section*{Appendix A: Proof of Theorem \ref{thm:as}}
 Let $\mc{F}_{k}$ be the $\sigma$ field generated by all non-null pvalues as well as $\{I(p_{i}\le s), I(p_{i} > \lambda): i\ge k\}$. Then $\hat{k}$ is a stopping time with respect to the backward filtration $\mc{F}_{n}\subset \mc{F}_{n - 1}\subset \cdots \subset \mc{F}_{1}$. Recall that $V(s, k) = \sum_{i\le k, i\not\in\nullset}I(p_{i}\le s)$ and $R(s, k) = \sum_{i\le k}I(p_{i}\le s)$, it holds that 
\begin{align*}
\mathrm{FDP} &= \frac{V(s, \hat{k})}{R(s, \hat{k})\vee 1} = \frac{V(s, \hat{k})}{1 + \sum_{i\le \hat{k}}I(p_{i} > \lambda)}\cdot \frac{1 + \sum_{i\le \hat{k}}I(p_{i} > \lambda)}{R(s, \hat{k})\vee 1}\\
& \le \frac{V(s, \hat{k})}{1 + \sum_{i\in \nullset, i\le \hat{k}}I(p_{i} > \lambda)}\cdot \frac{1 + \sum_{i\le \hat{k}}I(p_{i} > \lambda)}{R(s, \hat{k})\vee 1}\\
& \le \frac{V(s, \hat{k})}{1 + \sum_{i\in \nullset, i\le \hat{k}}I(p_{i} > \lambda)}\cdot \frac{1 - \lambda}{s}q.
\end{align*}
Let 
\[M(k) = \frac{V(s, k)}{1 + \sum_{i\in \nullset, i\le k}I(p_{i} > \lambda)}.\]
Now we prove that $M(k)$ is a backward martingale with respect to the filtration $\{\mc{F}_{k}: k = n, n - 1, \ldots, 1\}$. In fact, let
\[V^{+}(k) = V(s, k) = \sum_{i\in \nullset, i\le k}I(p_{i}\le s), \quad V^{-}(k) = \sum_{i\in \nullset, i\le k}I(p_{i} > \lambda).\]
The notations here is comparable to \citeA{barber15}. Then 
\[M(k) = \frac{V^{+}(k)}{1 + V^{-}(k)}.\]
If $n\in \nullset^{c}$ is non-null, then $M(k - 1) = M(k)$. If $n$ is null, let $(I_{1}, I_{2}) = (I(p_{k} \le s), I(p_{k} > \lambda))$. Since $\{p_{i}: i\in \nullset\}$ are i.i.d., by symmetry, 
\[P(I_{1} = 1, I_{2} = 0 | \mc{F}_{k}) = \frac{V^{+}(k)}{|\nullset\cap \{1, \ldots, k\}|}, \quad P(I_{1} = 0, I_{2} = 1 | \mc{F}_{k}) = \frac{V^{-}(k)}{|\nullset\cap \{1, \ldots, k\}|},\]
\[P(I_{1} = 0, I_{2} = 0 | \mc{F}_{k}) = 1 - \frac{V^{+}(k) + V^{-}(k)}{|\nullset\cap \{1, \ldots, k\}|}.\]
Thus,
\begin{align*}
&\E (M(k - 1)|\mc{F}_{k})\\ 
= &\frac{V^{+}(k) - 1}{1 + V^{-}(k)}\cdot\frac{V^{+}(k)}{|\nullset\cap \{1, \ldots, k\}|} + \frac{V^{+}(k)}{V^{-}(k)\vee 1}\cdot \frac{V^{-}(k)}{|\nullset\cap \{1, \ldots, k\}|} + \frac{V^{+}(k)}{1 + V^{-}(k)}\cdot \lb 1 - \frac{V^{+}(k) + V^{-}(k)}{|\nullset\cap \{1, \ldots, k\}|}\rb\\
\le &\frac{V^{+}(k) - 1}{1 + V^{-}(k)}\cdot\frac{V^{+}(k)}{|\nullset\cap \{1, \ldots, k\}|} + \frac{V^{+}(k)}{|\nullset\cap \{1, \ldots, k\}|} + \frac{V^{+}(k)}{1 + V^{-}(k)}\cdot \lb 1 - \frac{V^{+}(k) + V^{-}(k)}{|\nullset\cap \{1, \ldots, k\}|}\rb\\
= & \frac{V^{+}(k)}{1 + V^{-}(k)}\cdot \frac{V^{+}(k) + V^{-}(k)}{|\nullset\cap \{1, \ldots, k\}|} + \frac{V^{+}(k)}{1 + V^{-}(k)}\cdot \lb 1 - \frac{V^{+}(k) + V^{-}(k)}{|\nullset\cap \{1, \ldots, k\}|}\rb\\
= & \frac{V^{+}(k)}{1 + V^{-}(k)} = M(k).
\end{align*}
In summary,
\[\E (M(k - 1) | \mc{F}_{k})\le M(k)\]
which shows that $M_{k}$ is a backward super-martingale. Notice that $M(k)\le n$ is bounded, it follows from optimal stopping time theorem that
\[\E M(\hat{k})\le \E M(n) = \E\lb\frac{\sum_{i\in \nullset}I(p_{i}\le s)}{1 + \sum_{i\in \nullset}I(p_{i} > \lambda)}\rb.\]
It is easy to see that 
\[\mc{L}\lb \sum_{i\in \nullset}I(p_{i}\le s) \bigg| \sum_{i\in \nullset}I(p_{i}\le \lambda) = m\rb = Binom\lb m, \frac{F_{1}(s)}{F_{1}(\lambda)}\rb\]
and hence
\begin{align*}
\E\lb\frac{\sum_{i\in \nullset}I(p_{i}\le s)}{1 + \sum_{i\in \nullset}I(p_{i} > \lambda)}
\rb & = \E\lb\E\lb\frac{\sum_{i\in \nullset}I(p_{i}\le s)}{1 + \sum_{i\in \nullset}I(p_{i} > \lambda)}\bigg| \sum_{i\in \nullset}I(p_{i}\le \lambda)\rb\rb\\
& = \frac{F_{1}(s)}{F_{1}(\lambda)}\cdot \E\lb\frac{\sum_{i\in \nullset}I(p_{i}\le \lambda)}{1 + \sum_{i\in \nullset}I(p_{i} > \lambda)}\rb\\
& \le \frac{F_{1}(s)}{1 - F_{1}(\lambda)} 
\end{align*}
where the last assertion follows from the fact that for any binomial random variable $X\sim N(r, p)$,
\begin{equation}\label{eq: mean_end}
\E \frac{X}{r + 1 - X} \le \frac{s}{1 - \lambda}.
\end{equation}
In fact, 
\begin{align*}
\E \frac{X}{r + 1 - X} & = \sum_{i=0}^{n}\frac{i}{r + 1 - i}\cdot \com{r}{i}p^{i}(1 - p)^{r - i}\\
& = \sum_{i=1}^{r}\frac{r!}{(i - 1)!(r + 1 - i)!}\cdot p^{i}(1 - p)^{r - i}\\
& = \frac{p}{1 - p}\cdot\sum_{i=0}^{r - 1}\com{r}{i}p^{i}(1 - p)^{r - i}\\
& \le \frac{p}{1 - p}.
\end{align*}
Since $p_{i}\succeq U[0, 1]$, it holds that $F_{1}(s)\le s$ and $F_{1}(\lambda)\le \lambda$. Thus, by Optional Stopping Theorem, 
\[\E M(\hat{k})\le \E M(n) = \E\lb\frac{\sum_{i\in \nullset}I(p_{i}\le s)}{1 + \sum_{i\in \nullset}I(p_{i} > \lambda)}\rb \le \frac{s}{1 - \lambda}.\] 
and hence
\[FDR(\hat{k}) = \E \lb\frac{V(s, \hat{k})}{R(s, \hat{k})\vee 1}\rb \le \frac{s}{1 - \lambda}\cdot \frac{1 - \lambda}{s}q = q.\]

\section*{Appendix B: Proof of Theorem \ref{thm:power_as}}
\begin{lemma}\label{lem: sum_ber}
  Let $B_{i}\sim Ber(1, p_{i})$ are independent Bernoulli random variables. Then for some positive integer $r$ and postive real number $\nu$,
\[P\lb\sup_{k\ge r}\bigabs{\frac{\sum_{i=1}^{k}(B_{i} - p_{i})}{k}} > \nu\rb\le \lb 2 + \frac{4}{\nu^{2}}\rb e^{-\frac{r\nu^{2}}{2}}.\]
\end{lemma}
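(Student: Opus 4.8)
The plan is to treat this as a maximal inequality and reduce it to a union bound over individual indices. Write $S_{k} = \sum_{i=1}^{k}(B_{i} - p_{i})$, a sum of independent, mean-zero random variables with $|B_{i} - p_{i}| \le 1$, so that the event of interest is exactly $\bigcup_{k \ge r}\{\,|S_{k}|/k > \nu\,\}$ (a supremum over a countable set exceeds $\nu$ iff one of its terms does). The three ingredients are: a Chernoff/Azuma tail bound for each fixed $k$; a union bound, which here is lossless because $\{k\ge r\}$ is deterministic; and summation of the resulting geometric series, together with one elementary inequality to pin down the stated constants.

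Concretely, I would proceed as follows. First, for fixed $k \ge r$, apply the Azuma--Hoeffding inequality to the partial sums of $(B_{i} - p_{i})$, whose increments are bounded by $1$ in absolute value, to obtain $P\lb |S_{k}| > \nu k\rb \le 2\exp\lb -(\nu k)^{2}/(2k)\rb = 2e^{-\nu^{2}k/2}$. Second, sum this over $k \ge r$:
\[
P\lb\sup_{k\ge r}\bigabs{\frac{\sum_{i=1}^{k}(B_{i}-p_{i})}{k}} > \nu\rb \;\le\; \sum_{k\ge r}2e^{-\nu^{2}k/2} \;=\; \frac{2e^{-\nu^{2}r/2}}{1 - e^{-\nu^{2}/2}}.
\]
Third, bound the geometric factor using $1 - e^{-x} \ge x/(1+x)$ for $x>0$ (equivalently $e^{x}\ge 1+x$), which with $x = \nu^{2}/2$ gives $1/(1-e^{-\nu^{2}/2}) \le 1 + 2/\nu^{2}$, so the right-hand side is at most $(2 + 4/\nu^{2})e^{-\nu^{2}r/2}$, as claimed. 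It is also worth noting at the outset that when $\nu \ge 1$ the event is empty (since $|S_{k}|/k \le 1$ always), and more generally the asserted bound is vacuous whenever its right-hand side is $\ge 1$, so only the regime where the right-hand side is small requires any argument.

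I do not expect a genuine obstacle here; the only point needing a little care is matching the precise constants. The naive union bound over all $k\ge r$ might look wasteful, but exponential decay makes the tail sum comparable to its leading term, so the cost is merely the polynomial prefactor $1 + 2/\nu^{2}$, and the elementary bound $e^{x}\ge 1+x$ is exactly what produces the $2/\nu^{2}$ appearing in the statement. If one wanted a cleaner constant one could instead apply Doob's maximal inequality to the nonnegative supermartingale $e^{\nu S_{k} - k\nu^{2}/2}$ started at time $r$, which yields the sharper $2e^{-\nu^{2}r/2}$; but since the factor $2 + 4/\nu^{2}$ is harmless for the use of this lemma in the proof of Theorem~\ref{thm:power_as}, the union-bound route is the simplest.
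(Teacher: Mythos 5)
Your proposal is correct and follows essentially the same route as the paper: a per-$k$ subgaussian/Hoeffding-type tail bound of $2e^{-k\nu^{2}/2}$, a union bound over $k \ge r$ summed as a geometric series, and the elementary inequality $e^{x}\ge 1+x$ (equivalently $1-e^{-\nu^{2}/2}\ge \frac{1}{1+2/\nu^{2}}$) to obtain the prefactor $2+4/\nu^{2}$. The remarks about the vacuous regime and the possible Doob-supermartingale sharpening are fine but not needed.
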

\begin{proof}
Notice that $B_{i} - p_{i}$ is subgaussian with parameter $1$, we have
\[P\lb \bigabs{\frac{\sum_{i = 1}^{k}(B_{i} - p_{i})}{k}} > \nu\rb \le 2e^{-\frac{k\nu^{2}}{2}}.\]
Then 
\[P\lb\sup_{k\ge r}\bigabs{\frac{\sum_{i=1}^{k}(B_{i} - p_{i})}{k}} > \nu\rb\le 2\sum_{k\ge r}e^{-\frac{k\nu^{2}}{2}}\le \frac{2e^{-\frac{r\nu^{2}}{2}}}{1 - e^{-\frac{\nu^{2}}{2}}}\le \lb 2 + \frac{4}{\nu^{2}}\rb e^{-\frac{r\nu^{2}}{2}},\]
where the last step uses the fact that 
\[1 - e^{-\frac{\nu^{2}}{2}} = 1 - \frac{1}{e^{\frac{\nu^{2}}{2}}}\ge 1 - \frac{1}{1 + \frac{\nu^{2}}{2}} = \frac{1}{1 + \frac{2}{\nu^{2}}}.\]
\end{proof}

\begin{proposition}\label{prop:Pi}
For a VCT model with instantaneous non-null probability $\pi(t)$, 
\begin{equation}\label{eq:prop_non_null}
\max_{k = a_{n}, a_{n} + 1, \ldots, n}\bigg|\frac{\#\{i\le k: i\not\in \mathcal{H}_{0}\}}{k} - \Pi\lb\frac{k}{n}\rb\bigg|\le \eps_{n}
\end{equation}
holds with probability converging to 1 for properly chosen sequences $\{a_{n}\}$ and $\{\eps_{n}\}$ such that
\[a_{n} / n \rightarrow 0, \,\, a_{n}\rightarrow \infty, \,\, \eps_{n}\rightarrow 0.\]
In particular we can set $a_{n} = \lceil(\log n)^{2}\rceil$ and $\eps_{n} = \frac{1}{\sqrt{\log n}}$.
\end{proposition}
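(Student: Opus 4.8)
The plan is to pass to the latent group-membership indicators underlying the VCT mixture: writing $B_i = I(i\notin\nullset)$, the $B_i$ are independent with $B_i\sim\mathrm{Ber}(\pi(i/n))$, and $\#\{i\le k:\ i\notin\nullset\}=\sum_{i=1}^{k}B_i$. The triangle inequality then gives
\[
\bigabs{\frac{\sum_{i=1}^{k}B_i}{k}-\Pi(k/n)}\ \le\ \bigabs{\frac{\sum_{i=1}^{k}\bigl(B_i-\pi(i/n)\bigr)}{k}}\ +\ \bigabs{\frac{1}{k}\sum_{i=1}^{k}\pi(i/n)-\Pi(k/n)},
\]
and I would bound each summand by $\eps_n/2$, the first with high probability and the second deterministically.

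For the deterministic (Riemann-sum) term, write $\Pi(k/n)=\frac{n}{k}\int_{0}^{k/n}\pi(u)\,du$ and $\frac{1}{k}\sum_{i=1}^{k}\pi(i/n)=\frac{n}{k}\sum_{i=1}^{k}\int_{(i-1)/n}^{i/n}\pi(i/n)\,du$, so the difference equals $\frac{n}{k}\sum_{i=1}^{k}\int_{(i-1)/n}^{i/n}\bigl(\pi(i/n)-\pi(u)\bigr)\,du$, whose absolute value is at most $\sup_{|u-v|\le 1/n}|\pi(u)-\pi(v)|$ \emph{uniformly in} $k$. Under the regularity of $\pi$ in force --- it suffices that $\pi$ be Lipschitz, as in all of our examples --- this is $O(1/n)$, hence below $\eps_n/2=\tfrac12(\log n)^{-1/2}$ for all $n$ large (mere continuity of $\pi$ would already give the qualitative conclusion, at a possibly slower rate for $\eps_n$).

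For the stochastic term I would invoke Lemma \ref{lem: sum_ber} directly: since $\sup_{a_n\le k\le n}(\cdot)\le\sup_{k\ge a_n}(\cdot)$, applying it with $r=a_n=\lceil(\log n)^2\rceil$ and $\nu=\eps_n/2=\tfrac12(\log n)^{-1/2}$ bounds the probability that the stochastic term ever exceeds $\eps_n/2$ on that range by $\bigl(2+4/\nu^{2}\bigr)e^{-a_n\nu^{2}/2}=(2+16\log n)\,n^{-1/8}\to 0$. Combining the two bounds yields \eqref{eq:prop_non_null} with probability tending to $1$, and $a_n/n\to 0$, $a_n\to\infty$, $\eps_n\to 0$ all hold by construction.

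The crux of the argument --- and the reason Lemma \ref{lem: sum_ber} is phrased as it is --- is the uniformity over the growing index set $\{a_n,\dots,n\}$, which has $\Theta(n)$ elements. A naive union bound over these $k$ would require $n\cdot e^{-a_n\nu_n^{2}/2}\to 0$, which, at the natural scale $\nu_n^{2}\asymp 1/\log n$, forces $a_n$ to grow much faster than $\log n$. Lemma \ref{lem: sum_ber} sidesteps this: the $1/k$ normalization makes the deviation at index $k$ have a tail $e^{-k\nu^{2}/2}$ that decays \emph{in} $k$, so the union bound over $k\ge a_n$ telescopes into a convergent geometric-type sum costing only a factor $2/\nu^{2}=O(\log n)$ rather than a factor $n$. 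This is exactly why $a_n\asymp(\log n)^2$ is the right choice: $a_n\nu_n^{2}=\Theta(\log n)\to\infty$, so $e^{-a_n\nu_n^{2}/2}$ is a negative power of $n$ that swamps the $O(\log n)$ prefactor. I do not foresee any genuinely delicate step beyond getting this balance right.
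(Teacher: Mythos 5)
Your proof is correct and follows essentially the same route as the paper's: the same decomposition into a stochastic fluctuation term controlled by Lemma~\ref{lem: sum_ber} with $r=a_n=\lceil(\log n)^2\rceil$ and a deterministic Riemann-sum discretization error of order $O(1/n)$, with the same choice of $\eps_n$. The only cosmetic differences are that you split the tolerance as $\eps_n/2+\eps_n/2$ (the paper uses $\nu_n=\eps_n-1/n$) and you bound the discretization term via the modulus of continuity of $\pi$, making explicit the mild regularity of $\pi$ that the paper invokes implicitly.
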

\begin{remark}
The condition (12) of \citeA{Li15} sets $a_{n} = 0$, which cannot be true. As will be shown later, a growing sequence $a_{n}$ suffices for our asymptotic analysis.
\end{remark}
\begin{proof}[\textbf{Proof of Proposition \ref{prop:Pi}}]
Let $P_{n}(x)$ be the step function with $P_{n}(x) = \frac{\lfloor nx\rfloor}{n}$. For any given $k$,
\begin{align*}
&\bigabs{\frac{\#\{i\le k: i\not\in \nullset\}}{k} - \Pi\lb\frac{k}{n}\rb}\\
\le & \bigabs{\frac{\sum_{i = 1}^{k}(I(i\not\in \nullset) - \pi(\frac{i}{n}))}{k}} + \bigabs{\int_{0}^{\frac{k}{n}}\pi(x)d(P_{n}(x) - x)}\\
\le & \bigabs{\frac{\sum_{i = 1}^{k}(I(i\not\in \nullset) - \pi(\frac{i}{n}))}{k}} + \frac{1}{n}\cdot \int_{0}^{\frac{k}{n}}\pi(x)dx\\
\le & \bigabs{\frac{\sum_{i = 1}^{k}(I(i\not\in \nullset) - \pi(\frac{i}{n}))}{k}} + \frac{1}{n}.
\end{align*}
Let $\nu_{n} = \eps_{n} - \frac{1}{n}$. By Lemma \ref{lem: sum_ber},
\[P\lb\sup_{k\ge a_{n}}\bigabs{\frac{\sum_{i=1}^{k}(I(i\not\in\nullset) - \pi(\frac{i}{n}))}{k}} > \nu_{n}\rb\le \lb 2 + \frac{4}{\nu_{n}^{2}}e^{-\frac{a_{n}\nu_{n}^{2}}{2}}\rb \triangleq \delta_{n}\rightarrow 0.\]
Thus with probability $1 - \delta_{n}$,
\[\sup_{k\ge a_{n}}\bigabs{\frac{\#\{i\le k: i\not\in \nullset\}}{k} - \Pi\lb\frac{k}{n}\rb} \le \eps_{n}.\]
\end{proof}

\begin{proof}[\textbf{Proof of Theorem \ref{thm:power_as}}]
Note that when $F_{1}$ is strictly concave,
\[\frac{1 - F_{1}(\lambda)}{1 - \lambda} = \frac{F_{1}(1) - F_{1}(\lambda)}{1 - \lambda}\le \frac{F_{1}(1) - F_{1}(0)}{1 - 0} = 1\le \frac{F_{1}(s) - F_{1}(0)}{s - 0} = \frac{F_{1}(s)}{s}.\]
We will use this result throughout the proof. Select $a_{n} = \lceil(\log n)^{2}\rceil$ and $\eps_{n} = \frac{1}{\sqrt{\log n}}$ and let
\[\delta_{n} \triangleq\lb 2 + \frac{4}{\nu_{n}^{2}}\rb\cdot e^{-\frac{a_{n}\nu_{n}^{2}}{2}}.\]
Then $\delta_{n}\rightarrow 0$ and by Lemma \ref{lem: sum_ber}, with probability $1 - 2\delta_{n}$,
\[\sup_{k\ge a_{n}}\bigabs{\frac{\sum_{i=1}^{k}I(p_{i} > \lambda)}{k} - \frac{\sum_{i=1}^{k}\E I(p_{i} > \lambda)}{k}}\le \nu_{n},\]
and 
\[\sup_{k\ge a_{n}}\bigabs{\frac{\sum_{i=1}^{k}I(p_{i} \le s)}{k} - \frac{\sum_{i=1}^{k}\E I(p_{i} \le s)}{k}}\le \nu_{n}.\]
On the other hand, by Proposition \ref{prop:Pi}, 
\begin{align*}
  &\bigabs{\frac{\sum_{i=1}^{k}\E I(p_{i} > \lambda)}{k} - \lb \left[ 1 - \Pi\lb\frac{k}{n}\rb\right](1 - \lambda) + \Pi\lb\frac{k}{n}\rb(1 - F_{1}(\lambda))\rb}\\
= & \bigabs{\frac{\#\{i\le k: i\not\in\nullset\}}{k} - \Pi\lb\frac{k}{n}\rb}\cdot|F_{1}(\lambda) - \lambda| \le \eps_{n}.
\end{align*}
Similarly, 
\[\bigabs{\frac{\sum_{i=1}^{k}\E I(p_{i} \le s)}{k} - \lb \left[ 1 - \Pi\lb\frac{k}{n}\rb\right]s + \Pi\lb\frac{k}{n}\rb F_{1}(s)\rb}\le \eps_{n}.\]
These imply that with probability $1 - 2\delta_{n}$, it holds uniformly for $k\ge a_{n}$ that
\begin{align*}
&\widehat{\mathrm{FDP}}_{AS}(k) \le \frac{\left[ 1 - \Pi\lb\frac{k}{n}\rb\right](1 - \lambda) + \Pi\lb\frac{k}{n}\rb(1 - F_{1}(\lambda)) + \eps_{n} + \nu_{n} + \frac{1}{a_{n}}}{\left[ 1 - \Pi\lb\frac{k}{n}\rb\right]s + \Pi\lb\frac{k}{n}\rb F_{1}(s) - \eps_{n} - \nu_{n}}\\ 
\Longrightarrow & \frac{1 + \sum_{i=1}^{k}I(p_{i} > \lambda)}{1\vee \sum_{i =1}^{k}I(p_{i}\le s)} - \frac{\left[ 1 - \Pi\lb\frac{k}{n}\rb\right](1 - \lambda) + \Pi\lb\frac{k}{n}\rb(1 - F_{1}(\lambda))}{\left[ 1 - \Pi\lb\frac{k}{n}\rb\right]s + \Pi\lb\frac{k}{n}\rb F_{1}(s)}\\
&\le \lb\eps_{n} + \nu_{n} + \frac{1}{a_{n}}\rb\cdot \frac{\left[1 - \Pi\lb\frac{k}{n}\rb\right](1 - \lambda) + \Pi\lb\frac{k}{n}\rb(1 - F_{1}(\lambda)) + \left[ 1 - \Pi\lb\frac{k}{n}\rb\right]s + \Pi\lb\frac{k}{n}\rb F_{1}(s)}{\lb\left[ 1 - \Pi\lb\frac{k}{n}\rb\right]s + \Pi\lb\frac{k}{n}\rb F_{1}(s) - \eps_{n} - \nu_{n}\rb^{2}}\\
& \le \lb\eps_{n} + \nu_{n} + \frac{1}{a_{n}}\rb\cdot \frac{1 - F_{1}(\lambda) + F_{1}(s)}{(s - \eps_{n} - \nu_{n})^{2}}  
\end{align*}
and 
\begin{align*}
&\widehat{\mathrm{FDP}}_{AS}(k)  \ge \frac{\left[ 1 - \Pi\lb\frac{k}{n}\rb\right](1 - \lambda) + \Pi\lb\frac{k}{n}\rb(1 - F_{1}(\lambda)) - \eps_{n} - \nu_{n} + \frac{1}{a_{n}}}{\left[ 1 - \Pi\lb\frac{k}{n}\rb\right]s + \Pi\lb\frac{k}{n}\rb F_{1}(s) + \eps_{n} + \nu_{n}}\\
\Longrightarrow & \frac{1 + \sum_{i=1}^{k}I(p_{i} > \lambda)}{1\vee \sum_{i =1}^{k}I(p_{i}\le s)} - \frac{\left[ 1 - \Pi\lb\frac{k}{n}\rb\right](1 - \lambda) + \Pi\lb\frac{k}{n}\rb(1 - F_{1}(\lambda))}{\left[ 1 - \Pi\lb\frac{k}{n}\rb\right]s + \Pi\lb\frac{k}{n}\rb F_{1}(s)}\\
&\ge -\lb\eps_{n} + \nu_{n}\rb\cdot \frac{\left[1 - \Pi\lb\frac{k}{n}\rb\right](1 - \lambda) + \Pi\lb\frac{k}{n}\rb(1 - F_{1}(\lambda)) + \left[ 1 - \Pi\lb\frac{k}{n}\rb\right]s + \Pi\lb\frac{k}{n}\rb F_{1}(s)}{\lb\left[ 1 - \Pi\lb\frac{k}{n}\rb\right]s + \Pi\lb\frac{k}{n}\rb F_{1}(s)\rb^{2}}\\
& \ge -\lb\eps_{n} + \nu_{n}\rb\cdot \frac{1 - F_{1}(\lambda) + F_{1}(s)}{s^{2}}  
\end{align*}
Since $\eps_{n}, \nu_{n}\rightarrow 0$, we have
\begin{equation}\label{eq: uniform_bound_1}
\lim_{n\rightarrow\infty} \sup_{k\ge a_{n}}\bigabs{\widehat{\mathrm{FDP}}_{AS}(k) - \frac{\left[ 1 - \Pi\lb\frac{k}{n}\rb\right](1 - \lambda) + \Pi\lb\frac{k}{n}\rb(1 - F_{1}(\lambda))}{\left[ 1 - \Pi\lb\frac{k}{n}\rb\right]s + \Pi\lb\frac{k}{n}\rb F_{1}(s)}} = 0\quad a.s.
\end{equation}
Recall that 
\[\mathrm{FDR}^{*}_{AS}(t) = \frac{1 - \Pi(t) + \Pi(t)\frac{1 - F_{1}(\lambda)}{1 - \lambda}}{1 - \Pi(t) + \Pi(t)\frac{F_{1}(s)}{s}},\]
then 
\[\frac{d\mathrm{FDR}^{*}_{AS}(t)}{d\Pi(t)} = \frac{\frac{F_{1}(s)}{s} - \frac{1 - F_{1}(\lambda)}{1 - \lambda}}{\lb\frac{F_{1}(s)}{s} - 1\rb\lb 1 + (\frac{F_{1}(s)}{s} - 1)\Pi(t)\rb}\le \frac{\frac{F_{1}(s)}{s} - \frac{1 - F_{1}(\lambda)}{1 - \lambda}}{\frac{F_{1}(s)}{s} - 1}\triangleq L\]
and hence $\mathrm{FDR}^{*}_{AS}(t)$ is $LL_{\Pi}$Lipschitz where $L_{\Pi}$ is the Lipschitz constant of $\Pi$. This entails that 
\begin{equation}\label{eq: lipschitz}
 \sup_{k\le n}\sup_{|t - \frac{k}{n}| < \frac{1}{n}}\bigabs{\mathrm{FDR}^{*}_{AS}(t) - \mathrm{FDR}^{*}_{AS}\lb\frac{k}{n}\rb}\le \frac{LL_{\Pi}}{n}.
\end{equation}
(\ref{eq: lipschitz}) together with (\ref{eq: uniform_bound_1}) implies that 
\begin{equation}\label{eq: uniform_bound_2}
\lim_{n\rightarrow \infty}\sup_{t\ge a_{n} / n}|\widehat{\mathrm{FDP}}_{AS}(\lfloor nt\rfloor) - \mathrm{FDR}^{*}_{AS}(t)| = 0\quad a.s.
\end{equation}
Since $a_{n} / n\rightarrow 0$, for any $c > 0$, 
\begin{equation}\label{eq: uniform_bound_3}
\lim_{n\rightarrow \infty}\sup_{t\ge c}|\widehat{\mathrm{FDP}}_{AS}(\lfloor nt\rfloor) - \mathrm{FDR}^{*}_{AS}(t)| = 0\quad a.s.
\end{equation}

\noindent If $\mathrm{FDR}^{*}_{AS}(0)\ge q$, then for any $c > 0$ and $x\ge c$, $\mathrm{FDR}^{*}_{AS}(t) \ge \mathrm{FDR}^{*}_{AS}(c) > \mathrm{FDR}^{*}_{AS}(0)\ge q$. (\ref{eq: uniform_bound_3}) implies that 
\[\liminf_{n\rightarrow\infty}\inf_{t\ge c}\widehat{\mathrm{FDP}}_{AS}(\lfloor nt\rfloor)\ge \mathrm{FDR}^{*}_{AS}(c) > q\quad a.s.\]
By definition,
\[\widehat{\mathrm{FDP}}_{AS}\lb\hat{k}_{AS}\rb \le q\]
and hence $\hat{k} / n\le c$ almost surely. This holds for arbitrary $c > 0$, therefore, $\hat{k} / n\stackrel{a.s.}{\rightarrow} 0 = t^{*}_{AS}$. In this case, 
\[\mathrm{Pow}_{AS} = \frac{\# \{i\le \hat{k}: i\not\in \nullset, p_{i}\le s\}}{\#\{i\le n: i\not\in \nullset\}}\le \frac{\hat{k}}{n}\cdot \frac{n}{\#\{i\le n: i\not\in \nullset\}}\stackrel{a.s.}{\rightarrow}0\]
since $\hat{k} / n\stackrel{a.s.}{}b\rightarrow 0$ and 
\[\frac{n}{\#\{i\le n: i\not\in \nullset\}} \stackrel{a.s.}{\rightarrow} \frac{1}{\Pi(1)} < \infty.\]

~\\

\noindent If $\mathrm{FDR}^{*}_{AS}(1) \le q$, similar to the above argument, we have
\[\limsup_{n\rightarrow\infty}\widehat{\mathrm{FDP}}_{AS}(\lfloor n(1 - c)\rfloor)\le \mathrm{FDR}^{*}_{AS}(1 - c) < \mathrm{FDR}^{*}_{AS}(1) \le q\]
for arbitrary $c > 0$. This implies that  $\hat{k}/n \ge 1 - c$ almost surely. Thus, $\hat{k} / n\stackrel{a.s.}{\rightarrow} 1 = t^{*}_{AS}$. In this case,
\begin{align}
&\mathrm{Pow}_{AS} = \frac{\# \{i\le \hat{k}: i\not\in \nullset, p_{i}\le s\}}{\#\{i\le n: i\not\in \nullset\}}\nonumber\\ 
= &\frac{\# \{i\le \hat{k}: i\not\in \nullset, p_{i}\le s\}}{\#\{i\le \hat{k}: i\not\in \nullset\}}\cdot\frac{\#\{i\le \hat{k}: i\not\in \nullset\}}{\hat{k}}\cdot \frac{\hat{k}}{n}\cdot \frac{n}{\#\{i\le n: i\not\in \nullset\}}\label{eq: power_decompose}
\end{align}
Since $\hat{k} / n\stackrel{a.s.}{\rightarrow} 1 > 0$, $\hat{k} \ge a_{n}$ almost surely and hence
\[\Pi\lb\frac{\hat{k}}{n}\rb - \eps_{n} \ge \frac{\#\{i\le \hat{k}: i\not\in \nullset\}}{\hat{k}} \ge \Pi\lb\frac{\hat{k}}{n}\rb - \eps_{n}.\]
This implies that
\[\frac{\#\{i\le \hat{k}: i\not\in \nullset\}}{\hat{k}}\rightarrow \Pi(1)\quad a.s.\]
and as a byproduct, we know $\#\{i\le \hat{k}: i\not\in \nullset\}\stackrel{a.s.}{\rightarrow}\infty$. By Law of Large Number,
\[\frac{\# \{i\le \hat{k}: i\not\in \nullset, p_{i}\le s\}}{\#\{i\le \hat{k}: i\not\in \nullset\}}\stackrel{a.s.}{\rightarrow}F_{1}(s).\]
Therefore, 
\[\mathrm{Pow}_{AS}\rightarrow \Pi(1)\cdot F_{1}(s)\cdot 1 \cdot \frac{1}{\Pi(1)} = F_{1}(s)\quad a.s.\]

~\\

\noindent If $\mathrm{FDR}^{*}_{AS}(0) < q < \mathrm{FDR}^{*}_{AS}(1)$, then $t^{*}_{AS} = \mathrm{FDR}^{*-1}_{AS}(q)$. For any $c > 0$, (\ref{eq: uniform_bound_3}) implies that 
\[\limsup_{n\rightarrow \infty}\widehat{\mathrm{FDP}}_{AS}(\lfloor n(t^{*}_{AS} - c)\rfloor)\le \mathrm{FDR}^{*}_{AS}(t^{*}_{AS} - c) < \mathrm{FDR}^{*}_{AS}(t^{*}_{AS}) = q,\]
and 
\[\liminf_{n\rightarrow\infty}\sup_{t\ge t^{*}_{AS} + c}\widehat{\mathrm{FDP}}_{AS}(\lfloor nt\rfloor)\ge \mathrm{FDR}^{*}_{AS}(t^{*}_{AS} + c) > \mathrm{FDR}^{*}_{AS}(t^{*}_{AS}) = q.\]
Thus,
\[t^{*}_{AS} - c\le \frac{\hat{k}}{n}\le t^{*}_{AS} + c\quad a.s.\]
Since $c$ is arbitrary, we have $\hat{k} / n\stackrel{a.s.}{\rightarrow}t^{*}_{AS}$. In this case, notice that $\liminf_{n\rightarrow\infty}\hat{k} / n > 0$, we can apply the same argument as above and it follows from (\ref{eq: power_decompose}) that
\[\mathrm{Pow}_{AS}\rightarrow \Pi(t^{*}_{AS})\cdot F_{1}(s) \cdot t^{*}_{AS}\cdot \frac{1}{\Pi(1)} = \frac{t^{*}_{AS}\Pi(t^{*}_{AS})F_{1}(s)}{\Pi(1)}.\]
\end{proof}

\bibliography{power_of_ordered_test_icml2016_long}
\bibliographystyle{apacite}

\end{document}